\newtheorem{lemma}{Lemma}
\newtheorem{theorem}{Theorem}
\newtheorem{algorithm}{Algorithm}
\newtheorem{corollary}{Corollary}[lemma]
\newtheorem{definition}{Definition}
\newtheorem{conditions}{Conditions}
\definecolor{paletteRed}{RGB}{223,83,107}
\definecolor{paletteGreen}{RGB}{97,208,79}
\definecolor{paletteBlue}{RGB}{34,151,230}
\titleformat*{\section}{\large\bfseries}
\titleformat*{\subsection}{\normalsize\bfseries}
\title{Simulation-calibration testing for inference in Lasso regressions}
\date{}
\author{Matthieu Pluntz\footnote{Corresponding author: matthieu.pluntz@gmail.com} \footnote{High-Dimensional Biostatistics for Drug Safety and Genomics, CESP, Université Paris-Saclay, UVSQ, Université Paris-Sud, Inserm, Villejuif, France.} , Cyril Dalmasso\footnote{Laboratoire de Mathématiques et Modélisation d'Évry (LaMME), Université d'Évry Val d'Essonne, Évry, France.} , Pascale Tubert-Bitter$^{\dagger}$\footnote{Contributed equally.} , Ismaïl Ahmed$^{\dagger \S}$}
\begin{document}

\maketitle

\abstract{
We propose a test of the significance of a variable appearing on the Lasso path and use it in a procedure for selecting one of the models of the Lasso path, controlling the Family-Wise Error Rate.
\\
Our null hypothesis depends on a set A of already selected variables and states that it contains all the active variables. We focus on the regularization parameter value from which a first variable outside A is selected. As the test statistic, we use this quantity’s conditional p-value, which we define conditional on the non-penalized estimated coefficients of the model restricted to A. We estimate this by simulating outcome vectors and then calibrating them on the observed outcome’s estimated coefficients. We adapt the calibration heuristically to the case of generalized linear models in which it turns into an iterative stochastic procedure. We prove that the test controls the risk of selecting a false positive in linear models, both under the null hypothesis and, under a correlation condition, when A does not contain all active variables.
\\
We assess the performance of our procedure through extensive simulation studies. We also illustrate it in the detection of exposures associated with drug-induced liver injuries in the French pharmacovigilance database.
\\
\textit{Keywords}: Variable selection, high-dimensional regression, Lasso, empirical p-value, conditional p-value, FWER control.}

\section{Introduction}
Variable selection in high-dimensional regressions is a classic problem in health data analysis. It aims to identify a limited number of factors associated with a given health event among a large number of candidate variables such as genetic factors or environmental or drug exposures. The Lasso regression \citep{tibshirani_regression_1996} provides a series of sparse models which variables enter one after another as the regularization parameter $\lambda$ decreases. To be used in variable selection, the Lasso requires a further procedure for choosing the value of $\lambda$ and thus the associated model. By itself, the Lasso does not indicate whether the variables it selects are statistically significant.
\\
\cite{lockhart_significance_2014} address this by adapting the paradigm of hypothesis testing to the Lasso. They develop a significance test for $j_k$, the $k$-th variable selected on the Lasso path, by measuring the impact of its inclusion in the model compared to the selection by the Lasso of the set $A$ of the first $k-1$ variables only. Their test statistic is a difference of covariances between the Lasso results on two datasets --- complete or restricted to the variables in $A$ --- at the regularization parameter $\lambda_{k+1}$ where the next variable enters the Lasso path. They determine the distribution of this statistic in the linear model, under the null hypothesis asserting that the first $k-1$ variables selected by the Lasso contain all the active variables. It is therefore possible to calculate the exact p-value associated with this test. However, in the binary model, the test statistic's exact distribution is not known.
\\
We propose a testing procedure that addresses the same issue as the test by Lockhart et al.: produce, for each variable selected by the Lasso, a p-value that measures its significance while correcting from the drawbacks of this test: invalidity in nonlinear cases, and sometimes lack of power. We retain the idea of exploiting the value of the regularization parameter at which a given variable enters the Lasso path. All else being equal, the earlier on the Lasso path a variable is selected --- that is, at a high value of $\lambda$ --- the stronger and more significant its association with the response. Like Lockhart et al., we propose to test in the framework of hypothesis testing, for each of the variables entering the Lasso path, whether its selection $\lambda$ is higher than it should be if there were no link between this variable and the response variable.
\\
To this end, we are inspired by the permutation selection of \cite{sabourin_permutation_2015}. Their method aims to determine from the data an optimal value of $\lambda$, and then select the model given by the Lasso at this value. The idea is that the optimal $\lambda$ is the one below which it is likely that the Lasso inadvertently selects variables that are in reality independent of $Y$. To find it, they generate a population of random permutations $y^{(l)}, l = 1, .., N$ of the observed response $y$. This imposes "independence"  between the permuted responses and each of the $X_j$ (in the sense that all the $y^{(l)}_i$ have the same distribution despite the variation in $X_{i,j}$). Then they perform the Lasso regression of each permuted response vector on $X$, and measure $\lambda_0 (y^{(l)})$, the smallest value of $\lambda$ at which this Lasso regression selects no variable. The chosen regularization parameter is then the median of $\left\{ \lambda_0 (y^{(1)}), .., \lambda_0 (y^{(N)}) \right\}$.
\\
We adopt the idea of generating responses which follow the same distribution as the observed response, but which has been made artificially independent from the covariates. As in permutation selection, we perform the Lasso regression of the simulated response on the observed covariates and we focus on $\lambda_0$, the highest value of $\lambda$ where a covariate, which by construction is in fact independent of the response, is selected. These $\lambda_0$ obtained from several simulated datasets make up a reference population which is representative of the case where the response and covariates are independent. Unlike permutation selection, where one applies the median $\lambda_0$ from this population to the data of interest, we compare the $\lambda_0$ obtained on the data of interest to this reference population in order to test the significance of this $\lambda_0$, estimating the p-value of the test by Monte Carlo from the reference population. This generalizes to quantities other than $\lambda_0$: the $\lambda$ at which any variable enters the Lasso path, not just the first one. While Sabourin et al. retain only the distribution of the simulated response vectors while breaking their association with all the $X_j$ through permutation, we instead retain the simulated responses' association with some of the covariates by applying a “calibration” to them. Therefore our algorithm is called simulation-calibration.
\\
From a variable selection perspective, if the test concludes that the $\lambda$ at which a variable enters the Lasso path is significant, we select the variable. By iterating the test, we obtain a complete model selection procedure.
\\
Section \ref{problem statement simcal} presents our null hypothesis and basic notations. Section \ref{conditional pvalue} shows the need for, and defines, the \textit{conditional p-value} which we which we will use as a test statistic in the linear case. Section \ref{section linear simcal algorithm} presents the simulation-calibration algorithm estimating this statistic, and section \ref{prop simcal} shows that it is valid and permits control of the type 1 error. Section \ref{section nonlinear} adapts this test and this algorithm to generalized linear models. Section \ref{section proc simcal} presents iterative model selection procedures based on simulation-calibration. Section \ref{extended theorem} states a theorem which permits the model selection procedures to control the FWER (or FDR) under certain conditions in the linear case. Section \ref{simulation studies} shows extended simulation studies of the validity of the simulation-calibration test and of the iterative procedures' variable selection performances. Section \ref{BNPV} shows an application to real-word pharmacovigilance data. 

\section{Problem statement and notations}\label{problem statement simcal}

Let $X \in \mathbb{R}^{n \times p}$ be a matrix of $p$ covariates and $n$ observations and $y \in \mathbb{R}^{n}$ a response vector of size $n$. We consider the linear regression model:

\begin{equation} \label{linmodel}
y = \beta_0 + X\beta + \epsilon, \ \ \ \ \epsilon \sim \mathcal{N} \left(0,\sigma^2 I_n \right)
\end{equation}
\\
where $\beta_0 \in \mathbb{R}$, $\beta \in \mathbb{R}^p$, and $\sigma > 0$. Let $\hat{\beta}^{Lasso} (\lambda,y)$ be the Lasso estimator of $\beta$, which depends on the regularization parameter $\lambda$ and the response $y$. It also depends on the covariate matrix $X$, but this is considered constant in the following.
\\
Let $A$ be a subset of ${1, .., p}$. We assume that the covariates $X_j, j \in A$, are active, and we seek to determine if other covariates are active. We thus define the null and alternative hypotheses, which depend on $A$:
\begin{eqnarray}
H_0 (A) : \forall j \notin A,\beta_j = 0 \nonumber \\
H_1 (A) : \exists j \notin A,\beta_j \neq 0 \nonumber
\end{eqnarray}
\\
Although $A$ is \textit{a priori} an arbitrary subset of $\{1, .., p\}$, the idea, as in Lockhart et al., is to take $A$ as the set of indices of the first $k-1$ covariates entering the Lasso path, for relatively small values of $k$ (even if $p$ can be large). The simplest particular case is $A = \emptyset$.
\\
We consider the following statistic:
$$\lambda_A(y) = \sup \{ \lambda \geq 0 : \ \exists j \notin A, \ \hat{\beta_j}^{Lasso} (\lambda,y) \neq 0 \} $$
\\
That is, the $\lambda$ at which the first variable outside of $A$ is selected, or the $k$-th variable selected if $A$ includes the first $k-1$ variables. We will reject the null hypothesis if we observe a value of $\lambda_A$ that is too high compared to what is expected under the null hypothesis. An abnormally high $\lambda$ at which a variable enters the Lasso typically results from an association between the response and this variable. Therefore, we interpret a test of $H_0 (A)$ based on $\lambda_A$ as assessing the significance of the first variable selected by the Lasso outside of $A$, whose index we call $j_A$. 
\\
However, $H_0 (A)$ might be false while the variable $j_A$ is inactive, if there exists an active variable not belonging to $A$ that is selected "later" on the Lasso path, at a $\lambda$ lower than $\lambda_A$. In this case, it is correct to reject $H_0 (A)$ from a hypothesis testing perspective, but incorrect to conclude that the variable $j_A$ is significant. In section \ref{extended theorem}, we show under which conditions the probability of this event is controlled.
\\
Furthermore, it is possible that multiple variables enter the Lasso path simultaneously at the parameter $\lambda_A$, whether mathematically (for example, in the case of both $X$ and $y$ being binary, there is a non-zero probability of exact symmetry between the associations of $y$ with two distinct covariates) or approximately if the difference between the $\lambda$ at which two distinct variables are selected is too small to measure. In both cases, rejecting $H_0 (A)$ leads to the selection of all these variables.

\section{Conditional p-value} \label{conditional pvalue}

If $\lambda_A$ were used as the test statistic, its associated p-value would be, by definition:
$$p_A^0 (y) = \mathrm{P}_{H_0 (A)} \left( \lambda_A (Y) \geq  \lambda_A (y) \right)$$

where $Y$ is a random variable that follows the model \ref{linmodel}.

In general, there is no simple explicit analytical expression for the result of a Lasso regression. This prevents determining the exact distribution of $\lambda_A (y)$ and thus calculating $p_A^0 (y)$ or other similar quantities. Therefore, we aim to estimate this probability using the Monte Carlo method. To do this, we need to simulate i.i.d. outcome vectors which follow model \ref{linmodel} and satisfy $H_0(A)$.

However, this not possible without additional information because even assuming $H_0(A)$ is true, the parameters of model \ref{linmodel} are not known: $H_0(A)$ guarantees that $\beta_j = 0$ for $j \notin A$ but says nothing about $\beta_j$ for $j \in A$ or $\sigma$. A naive attempt to perform this simulation, where $\beta_0$, the $\beta_j$ for $j \in A$, and $\sigma$ are simply replaced by their estimates on the data, leads to a biased estimation of the p-value (see section \ref{etude simu simcal H0} and the supplementary material).

Therefore, instead of $p_A^0 (y)$, we focus on a variant whose definition includes all the necessary information for its estimation by Monte Carlo. It requires conditioning on the parameters that remain unknown under $H_0(A)$. Consider the following linear model:

\begin{equation} \label{modelA}
y = X_A \beta_A + \epsilon_A, \ \ \ \ \epsilon_A \sim \mathcal{N}(0,\sigma_A^2 I_n)
\end{equation}

where $X_A$ is the matrix composed of a column vector of $1$s and the columns of $X$ whose indices belong to $A$, and $\beta_A$ is the vector $(\beta_j)_{j \in \{0\} \cup A}$. 

This is the reduced form that the model \ref{linmodel} takes when $H_0 (A)$ is true. Then we have $\sigma = \sigma_A$. Let $\theta_A = (\beta_A,\sigma_A)$ be the parameter vector of the model \ref{modelA}, $\Theta_A = \mathbb{R}^{1+|A|} \times \mathbb{R}_{+}$ the space of its possible values, and $\widehat{\theta_A} (y)$ the maximum likelihood estimator (unpenalized) of $\theta_A$. We define the following test statistic:
\begin{definition}[$p_A$ statistic] \label{pAy}
$$p_A (y) = \mathrm{P}_{H_0 (A)} \left(\lambda_A (Y) \geq  \lambda_A (y) | \widehat{\theta_A} (Y) = \widehat{\theta_A} (y) \right).$$
\end{definition}

This formula is similar to that of a p-value but with an added conditioning, making $p_A (y)$ a \textit{conditional p-value}.
\\
$p_A (y)$ is defined so that we can know its distribution as precisely as possible. We have the following result:

\begin{lemma} \label{inequality pAY}
Under $H_0 (A)$,
$$\forall t \in [0,1], \mathrm{P}_{H_0 (A)} \left(p_A(Y) \leq t \right) \leq t.$$
\end{lemma}

In other words, the null distribution of $p_A(Y)$ \textit{stochastically dominates} the uniform distribution on $[0,1]$. This phrase, used in decision theory, means that the cumulative distribution function of the dominant probability distribution is lower at every point than that of the dominated distribution, and thus the dominant distribution is systematically shifted towards higher values compared to the dominated distribution.
\\
In practice, the distribution of $\lambda_A (Y)$ conditional on $\widehat{\theta_A} (Y) = \theta_{A}'$ is typically continuous for all parameter vectors with a non-zero standard deviation, i.e., any $\theta_{A}' \in \Theta_A^*$ where $\Theta_A^* = \mathbb{R}^{|A|} \times \mathbb{R}_{+}^*$. We call the assumption that this is true the \textit{continuity assumption}. It implies the following simpler result:

\begin{corollary} \label{uniform distribution pAY}
Under the continuity assumption, $p_A(Y)$ follows a uniform distribution on $[0,1]$.
\end{corollary}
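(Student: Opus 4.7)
The plan is to revisit the proof of Lemma \ref{inequality pAY} and upgrade every inequality in it to an equality, using the equality case of Lemma \ref{inequality cdf} that is available when the underlying distribution is continuous. The key observation is that the only place where the inequality (rather than an equality) was introduced is in the conditional statement \eqref{conditional inequality}, where Lemma \ref{inequality cdf} was applied to the conditional distribution of $-\lambda_A(Y)$ given $\widehat{\theta_A}(Y) = \theta_A'$.

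First I would spell out what the continuity assumption yields: for every $\theta_A' \in \Theta_A$, the conditional distribution of $\lambda_A(Y)$ (equivalently of $-\lambda_A(Y)$) given $\widehat{\theta_A}(Y) = \theta_A'$ is continuous, so its cumulative distribution function $F_{\theta_A'}$ has no atoms. Next I would invoke the second part of Lemma \ref{inequality cdf} with $U = -\lambda_A(Y)$ distributed according to the conditional law given $\widehat{\theta_A}(Y) = \theta_A'$, to conclude
$$\forall t \in [0,1], \ \forall \theta_A' \in \Theta_A, \ \mathrm{P}_{H_0(A)}\!\left( p_A(Y) \leq t \mid \widehat{\theta_A}(Y) = \theta_A' \right) = t,$$
since $p_A(y) = F_{\widehat{\theta_A}(y)}(-\lambda_A(y))$ as was shown in the proof of Lemma \ref{inequality pAY}.

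Finally I would plug this conditional equality into the total-probability decomposition \eqref{integral pAY}, giving
$$\mathrm{P}_{H_0(A)}\!\left( p_A(Y) \leq t \right) = \int_{\theta_A' \in \Theta_A} t \, d\mathrm{P}_{H_0(A)}\!\left( \widehat{\theta_A}(Y) = \theta_A' \right) = t,$$
which is exactly the CDF of the uniform distribution on $[0,1]$.

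I do not anticipate a major obstacle: the corollary is a direct strengthening of Lemma \ref{inequality pAY}, and the continuity hypothesis is tailored precisely so that the equality clause of Lemma \ref{inequality cdf} applies conditionally. The one point requiring slight care is justifying that the continuity assumption, as stated in the main paper, transfers to the conditional distribution of $-\lambda_A(Y)$ given $\widehat{\theta_A}(Y)$; this would normally be a one-line remark referring back to the precise form of the assumption.
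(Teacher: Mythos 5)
Your proposal is correct and takes essentially the same route as the paper: rerun the proof of Lemma \ref{inequality pAY}, upgrade the conditional inequality \eqref{conditional inequality} to an equality via the continuous case of Lemma \ref{inequality cdf}, and integrate over the values of $\widehat{\theta_A}(Y)$. The one point you flag at the end is exactly where the paper is slightly more careful than you are: the continuity assumption holds only on $\Theta_A^*$ (parameters with $\sigma_A' > 0$), not on all of $\Theta_A$ as you assert --- for $\sigma_A' = 0$ the conditional law of $\lambda_A(Y)$ is genuinely non-continuous --- so the integral must be restricted to $\Theta_A^*$, which costs nothing since $\mathrm{P}\left( \widehat{\sigma_A}(Y) = 0 \right) = 0$.
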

The proofs of Lemma \ref{inequality pAY} and corollary \ref{uniform distribution pAY} are in the supplementary material.

Lemma \ref{inequality pAY} allows us to use $p_A (y)$ as a test statistic. This quantity is close to $0$ when $\lambda_A (y)$ is high, which is indicative of $H_1 (A)$. Therefore, we wish to reject $H_0 (A)$ if $p_A (y)$ is sufficiently small. The lemma ensures that for all $\alpha \in [0,1]$, rejecting $H_0 (A)$ if and only if $p_A (y) \leq \alpha$ guarantees that the type 1 error is at most $\alpha$. The p-value associated with $p_A (y)$ is $p_A(y)$ itself under the continuity assumption, and not larger than $p_A(y)$ in the general case.

\section{Algorithm estimating the conditional p-value} \label{section linear simcal algorithm}
Unlike $p_A^0 (y)$, we are able to estimate $p_A (y)$ with a Monte Carlo method. The problem with estimating $p_A^0 (y)$ is that it is not possible to simulate $Y$ following its distribution under $H_0 (A)$, which is not known. To estimate $p_A (y)$, due to the conditioning in its definition, we must instead simulate $Y$ following its distribution under $H_0 (A)$ conditional on $\widehat{\theta_A} (Y)$. The following Monte Carlo algorithm does this. It is parameterized by an integer $N$, the number of simulations, which controls estimation accuracy.

\begin{algorithm}[Estimating $p_A(Y)$ by simulation-calibration]\label{linear simcal algorithm}
The four steps are:
\begin{itemize}
\item[1]{Compute $\widehat{\theta_A} (y)$, the parameter vector of model \ref{modelA} estimated by maximum likelihood.}
\item[2]{Simulate $N$ i.i.d. response vectors $y^{(1)}, .., y^{(N)}$ as follows, for each $l = 1, .. ,N$:
\begin{itemize}
\item[2.1]{Simulate $y^{sim}$ following model \ref{modelA} with an arbitrary parameter vector $\theta_A^{sim} \in \Theta_A^*$:
$$y^{sim} = X_A \beta_A^{sim} + \epsilon \ \ \text{with} \ \epsilon \sim \mathcal{N}(0,{\sigma^{sim}}^2 I_n)$$
}
\item[2.2]{Compute the maximum likelihood estimate $\widehat{\theta_A} (y^{sim})$.}
\item[2.3]{Compute the calibrated version of $y^{sim}$:
$$y^{(l)} = X_A \widehat{\beta_{A}} (y) + \frac{\widehat{\sigma_{A}} (y)}{\widehat{\sigma_{A}} (y^{sim})}  \left(y^{sim} - X_A \widehat{\beta_{A}} (y^{sim}) \right) $$}
\end{itemize}}
\item[3]{For each $l = 1, .., N$, perform the Lasso regression of $y^{(l)}$ on $X$, and compute $\lambda_A (y^{(l)})$.}
\item[4]{Compute the empirical conditional p-value:
$$\widehat{p_A} (y) = \frac{1}{N} \sum_{l=1}^N \mathbf{1} \left\{ \lambda_A (y^{(l)}) \geq  \lambda_A (y) \right\}.$$ }
\end{itemize}
\end{algorithm}

In practice, at step 2.1, we use $\theta_A^{sim} = \widehat{\theta_A} (y)$, but other choices are possible because all values of $\theta_A^{sim}$ produce calibrated responses $y^{(l)}$ that follow the same distribution.

\section{Properties of the algorithm} \label{prop simcal}
\subsection{Proof of consistency and properties for a given response vector}
\subsubsection{Conditional distribution of the response vector}
The algorithm's consistency is based on knowing the conditional distribution of $Y$ under the null hypothesis. It is given by the following lemma:

\begin{lemma} \label{Y uniforme}
Under $H_0 (A)$, the distribution of $Y$ conditional on $\widehat{\theta_A} (Y) = \widehat{\theta_A} (y)$ is the uniform distribution on the set $S \left( \widehat{\theta_A} (y) \right) = \left\{ y' \in \mathbb{R}^n \mid \widehat{\theta_A} (y') =\widehat{\theta_A} (y) \right\}$.
\end{lemma}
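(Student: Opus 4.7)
The plan is to exploit the orthogonal decomposition of $Y$ induced by the projection $P_A = X_A (X_A^T X_A)^{-1} X_A^T$ onto the column space of $X_A$, together with the spherical symmetry of the isotropic Gaussian.

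First I would geometrically identify $S(\widehat{\theta_A}(y))$. The estimator $\widehat{\beta_A}(y') = (X_A^T X_A)^{-1} X_A^T y'$ depends on $y'$ only through $P_A y' = X_A \widehat{\beta_A}(y')$, and $\widehat{\sigma_A}^2 (y') = \frac{1}{n} \|(I - P_A) y'\|^2$ depends on $y'$ only through the norm of its residual component $(I - P_A) y'$. Fixing $\widehat{\theta_A}(y')$ therefore pins down $P_A y'$ exactly (since $X_A$ has full column rank) and the norm of $(I-P_A) y'$, so $S(\widehat{\theta_A}(y))$ is precisely the sphere of radius $\sqrt{n}\, \widehat{\sigma_A}(y)$ centered at $X_A \widehat{\beta_A}(y)$ inside the affine subspace $X_A \widehat{\beta_A}(y) + \mathrm{Im}(X_A)^\perp$.

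Next I would analyze the distribution of $Y$ under $H_0(A)$. Writing $Y = P_A Y + (I - P_A) Y$, the two components are jointly Gaussian and uncorrelated (because $P_A (I - P_A) = 0$), hence independent. By independence, conditioning on the event $\{P_A Y = X_A \widehat{\beta_A}(y)\}$ leaves the conditional law of $(I - P_A) Y$ unchanged, namely a centered Gaussian on $\mathrm{Im}(X_A)^\perp$ with covariance $\sigma_A^2 (I - P_A)$. This covariance is a scalar multiple of the identity on $\mathrm{Im}(X_A)^\perp$, so the distribution of $(I-P_A)Y$ on that subspace is spherically symmetric.

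The remaining step is to condition that spherically symmetric Gaussian on the norm $\|(I - P_A) Y\| = \sqrt{n}\, \widehat{\sigma_A}(y)$. I would invoke the classical fact that conditioning an isotropic Gaussian on a Euclidean subspace on the value of its norm yields the uniform (surface/Haar) measure on the corresponding sphere; this follows because the density of the Gaussian depends on the point only through its norm, so after conditioning on the norm the conditional density is constant on the sphere. Translating by the fixed vector $X_A \widehat{\beta_A}(y)$ preserves uniformity, which gives exactly the uniform distribution on $S(\widehat{\theta_A}(y))$.

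The main obstacle I anticipate is the usual measure-theoretic delicacy of conditioning on the continuous event $\widehat{\theta_A}(Y) = \widehat{\theta_A}(y)$, which has probability zero. I would handle this through the disintegration of the Gaussian along the two independent orthogonal components, so that the conditional law is well defined as a regular conditional distribution on the sphere, independent of any choice of version, and matches the uniform measure almost surely in $\widehat{\theta_A}(y)$.
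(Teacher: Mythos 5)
Your proof is correct, but it follows a different route from the paper's. The paper argues purely through the density: under $H_0(A)$ it expands $\log p_Y(y')$ and uses the decomposition of the total sum of squares into the part explained by $X_A$ plus the residual sum of squares to show that $p_Y(y')$ is a deterministic function of $\widehat{\theta_A}(y')$ alone; since the density is therefore constant on the level set $S(\widehat{\theta_A}(y))$, the conditional distribution is declared uniform. You instead work geometrically: you identify $S(\widehat{\theta_A}(y))$ explicitly as a sphere of radius $\sqrt{n}\,\widehat{\sigma_A}(y)$ in the affine subspace $X_A\widehat{\beta_A}(y)+\mathrm{Im}(X_A)^{\perp}$, split $Y$ into the independent components $P_A Y$ and $(I-P_A)Y$, and condition the isotropic residual component on its norm. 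The two arguments rest on the same underlying fact (the Gaussian density depends on $y'$ only through $\widehat{\theta_A}(y')$), but yours buys two things the paper's does not: an explicit description of the geometry of the conditioning set (which the paper only exploits later, in its Lemma~\ref{multivariate stochastic dominance}, via the same sphere construction), and a cleaner treatment of the measure-zero conditioning through disintegration along independent orthogonal components --- the paper's final step ``density constant on the set, therefore the conditional distribution is uniform'' leaves that regular-conditional-probability issue implicit. The paper's computation is shorter and requires no geometric identification, but your version is the more self-contained justification.
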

The proof is in the supplementary material.
\\
The definition of $p_A (y)$ (\ref{pAy}) is based on an arbitrary random vector $Y$ which follows this conditional distribution. Therefore, $Y$ can be replaced by any random variable following the same distribution, given by lemma \ref{Y uniforme}:
\begin{corollary} \label{pAy toute va}
Any random variable $Y^u$ that follows a uniform distribution over $S \left( \widehat{\theta_A} (y) \right)$ satisfies $\mathrm{P}(\lambda_A (Y^u) \geq  \lambda_A (y)) = p_A (y)$.
\end{corollary}

\subsubsection{Calibration and simulation under the conditional distribution}
Corollary \ref{pAy toute va} implies that to estimate $p_A (y)$ by Monte Carlo, one just has to simulate random response vectors uniformly over $S \left( \widehat{\theta_A} (y) \right)$. Now, we show that step 2. of algorithm \ref{linear simcal algorithm} does this.

We use \textit{calibration functions}, which we define as follows. For any parameter vectors $\theta_{A}^{(1)} = (\beta_{A}^{(1)}, \sigma_A^{(1)}) \in \Theta_A^*$ and $\theta_{A}^{(2)} = (\beta_{A}^{(2)}, \sigma_A^{(2)}) \in \Theta_A$:
\begin{eqnarray}
    \mathrm{cal}_{\theta_{A}^{(1)} \rightarrow \theta_{A}^{(2)}} : \mathbb{R}^n &\rightarrow& \mathbb{R}^n \nonumber \\
     y' &\rightarrow& X_A \beta_{A}^{(2)} + \frac{\sigma_A^{(2)}}{\sigma_A^{(1)}}  \left(y' - X_A \beta_{A}^{(1)} \right). \nonumber
\end{eqnarray}

They map one given value of $\widehat{\theta_A}$ to another, that is:
\begin{lemma} \label{égalité calibration}
For all $\theta_{A}^{(1)} \in \Theta_A^*, \theta_{A}^{(2)} \in \Theta_A$, for all $y^{(1)} \in \mathbb{R}^n$ satisfying $\widehat{\theta_A} (y^{(1)}) = \theta_{A}^{(1)}$, we have:
$$\widehat{\theta_A} \left( \mathrm{cal}_{\theta_{A}^{(1)} \rightarrow \theta_{A}^{(2)}} (y^{(1)}) \right) = \theta_{A}^{(2)}.$$
\end{lemma}
The proof is in the supplementary material.

Calibration functions are a way to "impose" on a response vector $y'$ the condition that $\widehat{\theta_A} (y') = \theta_{A}^{(2)}$ for any target parameter vector $\theta_{A}^{(2)}$. As a composition of a homothety and a translation, they can be seen informally as the simplest change one can make to $y'$ while imposing this condition. In particular, the empirical correlation structure between $y'$ and variables not belonging to $A$ varies little between $y'$ and its calibrated version: it is only modified to the extent that it is carried by the variables in $A$.
\\
Calibration enables simulating the uniform distribution over $S \left( \widehat{\theta_A} (y) \right)$:

\begin{lemma} \label{uniform simulation}
Step 2. of algorithm \ref{linear simcal algorithm} simulates each $y^{(l)}$ following the uniform distribution over $S \left( \widehat{\theta_A} (y) \right)$.
\end{lemma}

\begin{proof}
By replacing $Y$ and $\widehat{\theta_A} (y)$ in lemma \ref{Y uniforme} with $y^{sim}$ and $\widehat{\theta_A} (y^{sim})$, we find that the distribution of $y^{sim}$ conditional on $\widehat{\theta_A} (y^{sim})$ is uniform over $S \left(\widehat{\theta_A} (y^{sim})\right)$. Furthermore, conditional on $\widehat{\theta_A} (y^{sim})$,
$$\mathrm{cal}_{\widehat{\theta_A} (y^{sim}) \rightarrow \widehat{\theta_A} (y)}: y' \rightarrow X_A \widehat{\beta_A} (y) + \frac{\widehat{\sigma_A} (y)}{\widehat{\sigma_A} (y^{sim})} \left(y' - X_A \widehat{\beta_A} (y^{sim}) \right)$$
is the composition of a homothety and a translation. These transformations preserve the uniformity of a probability distribution. Also, according to lemma \ref{égalité calibration}, it maps $S \left( \widehat{\theta_A} (y^{sim}) \right)$ onto $S \left( \widehat{\theta_A} (y) \right)$. Therefore, conditional on $\widehat{\theta_A} (y^{sim})$, $y^{(l)}$ follows the uniform distribution over $S \left( \widehat{\theta_A} (y) \right)$. By integration, without conditioning $y^{(l)}$ follows the uniform distribution over $S \left( \widehat{\theta_A} (y) \right)$.
\end{proof}

\subsubsection{Distribution of the empirical conditional p-value}
The preceding lemmas give us the distribution of $\widehat{p_A} (y)$ produced by the algorithm. It is the binomial distribution:
\begin{lemma} \label{binomial distribution}
$$N \widehat{p_A} (y) \sim \mathrm{Bin} (N, p_A (y)).$$
\end{lemma}
\begin{proof}
For each $l$, $y^{(l)}$ follows the uniform distribution over $S \left( \widehat{\theta_A} (y) \right)$. Therefore, according to corollary \ref{pAy toute va}, for each simulation:

$$P \left( \lambda_A (y^{(l)} ) \geq  \lambda_A (y) \right) = p_A (y).$$

$N \widehat{p_A} (y)$ is thus the sum of $N$ independent binary random variables with mean $p_A (y)$.
\end{proof}

$\widehat{p_A} (y)$ is therefore an unbiased and, due to the law of large numbers, consistent estimator of $p_A (y)$. Its variance is $\mathrm{Var} \left( \widehat{p_A} (y) \right) = p_A (y)(1-p_A (y))/N$, which converges to $0$ as $N$ goes to infinity.

\subsection{Properties for a random response vector} \label{prop simcal random y}

The above applies to a given response vector $y$. In the case of a random response vector $Y$ that follows model \ref{linmodel} (or model \ref{modelA} under the null hypothesis), it is interesting to know the distribution of $\widehat{p_A} (Y)$. That is because we do not observe the theoretical test statistic $p_A (Y)$: we only estimate it. The decision to select or not a variable is based on $\widehat{p_A} (Y)$, and any control of the type I error must be inferred from the distribution of $\widehat{p_A} (Y)$ under the null hypothesis. We determine this distribution exactly under the continuity assumption defined in \ref{conditional pvalue}, and then we infer a stochastic dominance of this distribution in the general case which guarantees the type I error control.
\\
The relationship between $p_A (y)$ and $\widehat{p_A} (y)$ (lemma \ref{binomial distribution}) allows us to translate results about the distribution of $p_A (Y)$ to that of $\widehat{p_A} (Y)$, moving from a continuous to a discrete distribution. We have two results with and without the continuity assumption:

\begin{lemma} \label{discrete uniform}
Under $H_0 (A)$ and under the continuity assumption, the distribution of $\widehat{p_A} (Y)$ is the discrete uniform distribution over $\{0, 1/N, .., 1-1/N, 1\}$.
\end{lemma}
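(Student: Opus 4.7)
The plan is to interpret $\widehat{p_A}(Y)$ as the Monte Carlo estimator
$$\widehat{p_A}(y) = \frac{1}{N} \#\bigl\{k \in \{1,\dots,N\} : \lambda_A(Y^{(k)}) \geq \lambda_A(y) \bigr\},$$
where $Y^{(1)},\dots,Y^{(N)}$ are the $N$ calibrated simulations used to approximate $p_A(y)$, and to reduce the statement to the classical fact that the rank of one exchangeable continuous observation among $N+1$ is uniform on $\{1,\dots,N+1\}$.

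First I would condition on $\widehat{\theta_A}(Y) = \theta_A' \in \Theta_A^*$, a conditioning event that is enough since $\mathrm{P}(\widehat{\theta_A}(Y) \notin \Theta_A^*) = 0$ as noted in the proof of Corollary \ref{uniform distribution pAY}. By Lemma \ref{Y uniforme}, under $H_0(A)$ this conditional distribution is the uniform distribution on $S(\theta_A')$. Next I would verify that each simulated replicate $Y^{(k)}$, obtained by applying the calibration map $\mathrm{cal}_{\theta_A' \rightarrow \theta_A'}$ to an independent draw under a convenient pivot law, lies in $S(\theta_A')$ by Lemma \ref{calibration equality}, and that its conditional law is precisely this same uniform distribution. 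Hence, conditionally on $\widehat{\theta_A}(Y) = \theta_A'$, the $N+1$ vectors $Y, Y^{(1)}, \dots, Y^{(N)}$ are i.i.d.\ uniform on $S(\theta_A')$, and the $N+1$ scalars $\lambda_A(Y), \lambda_A(Y^{(1)}), \dots, \lambda_A(Y^{(N)})$ are i.i.d.\ draws from the common conditional law of $\lambda_A$ given $\widehat{\theta_A} = \theta_A'$.

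Under the continuity assumption this common conditional law is continuous, so almost surely no two of these $N+1$ values coincide. By exchangeability, the rank of $\lambda_A(Y)$ among them is uniformly distributed on $\{1,\dots,N+1\}$, so the count $\#\{k : \lambda_A(Y^{(k)}) \geq \lambda_A(Y)\}$ is uniformly distributed on $\{0,1,\dots,N\}$, which gives that $\widehat{p_A}(Y)$ is uniform on $\{0, 1/N, \dots, 1-1/N, 1\}$ conditionally on $\widehat{\theta_A}(Y) = \theta_A'$. Since this conditional distribution does not depend on $\theta_A'$, integrating against the law of $\widehat{\theta_A}(Y)$ on $\Theta_A^*$ yields the same unconditional distribution for $\widehat{p_A}(Y)$.

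The main obstacle is the second step: rigorously checking that the calibrated replicates $Y^{(k)}$ really are distributed according to the true conditional law of $Y$ given $\widehat{\theta_A}(Y) = \theta_A'$, and not merely supported on $S(\theta_A')$. This combines Lemma \ref{calibration equality} (which guarantees the correct level set) with Lemma \ref{Y uniforme} (which identifies that conditional law as uniform on $S(\theta_A')$), but one must check that the calibration procedure is designed so that the push-forward of the simulation law is exactly this uniform distribution; everything after that is essentially the standard exchangeable-rank argument.
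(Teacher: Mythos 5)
Your argument is correct, but it takes a genuinely different route from the paper's. The paper never opens up the simulation mechanism inside this proof: it imports from the main article the conditional law $N\widehat{p_A}(Y)\mid p_A(Y)\sim\mathrm{Bin}(N,p_A(Y))$, invokes Corollary \ref{uniform distribution pAY} to get $p_A(Y)\sim\mathcal{U}([0,1])$, and evaluates the beta integral ${N \choose k}\int_0^1 t^k(1-t)^{N-k}\,dt = 1/(N+1)$. You instead run the classical exchangeable-rank argument: conditionally on $\widehat{\theta_A}(Y)=\theta_A'$, the $N+1$ statistics $\lambda_A(Y),\lambda_A(Y^{(1)}),\dots,\lambda_A(Y^{(N)})$ are i.i.d.\ from an atomless law, so the rank of the observed one is uniform on $\{1,\dots,N+1\}$. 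The two proofs rest on essentially the same unproved-here ingredient: your joint exchangeability of $(Y,Y^{(1)},\dots,Y^{(N)})$ given $\widehat{\theta_A}(Y)=\theta_A'$ is equivalent to the simulations being i.i.d.\ from the conditional null law (Lemmas \ref{calibration equality} and \ref{Y uniforme} supply this, as you note), which is also what underlies the binomial statement the paper cites, so neither proof is more complete on that front. What your route buys is transparency --- the value $1/(N+1)$ appears as a rank probability rather than as the output of a Gamma-function identity --- and it bypasses Corollary \ref{uniform distribution pAY} entirely. What the paper's route buys is reusability: the representation of $\widehat{p_A}(Y)$ as a binomial mixed over the law of $p_A(Y)$ is exactly what powers the coupling in Lemma \ref{discrete stochastic dominance} for the non-continuous case, where ties would break the rank argument but the $\mathrm{Bin}(N,\cdot)$ structure survives. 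One small imprecision to tidy up if you write this out: the replicates are obtained by calibrating from $\widehat{\theta_A}$ of the raw simulation to $\theta_A'=\widehat{\theta_A}(y)$, not by the identity map $\mathrm{cal}_{\theta_A'\rightarrow\theta_A'}$; this does not affect the structure of your argument.
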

\begin{lemma}\label{discrete stochastic dominance}
Under $H_0 (A)$, the distribution of $\widehat{p_A} (y)$ stochastically dominates the discrete uniform distribution over $\{0, 1/N, .., 1-1/N, 1\}$.
\end{lemma}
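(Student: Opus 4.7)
The plan is to apply Lemma \ref{dominance criterion} twice: first in its reverse direction, to turn the stochastic dominance of $p_A(Y)$ over $\mathrm{Unif}[0,1]$ from Lemma \ref{inequality pAY} into an almost-sure pointwise inequality, and then in its easy direction to turn an almost-sure inequality between the two discrete statistics back into stochastic dominance. The bridge between the two applications will be a monotone coupling of the Binomial sampling step that defines $\widehat{p_A}(Y)$ from $p_A(Y)$.

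Concretely, I would start by invoking Lemma \ref{dominance criterion} on Lemma \ref{inequality pAY} to realize $p_A(Y)$ with its true law under $H_0(A)$ jointly with a random variable $U \sim \mathrm{Unif}[0,1]$ on a common probability space, such that $p_A(Y) \geq U$ almost surely. I would then enlarge the space with i.i.d.\ $\mathrm{Unif}[0,1]$ variables $V_1, \ldots, V_N$ independent of $(p_A(Y), U)$ and define
\[
\widehat{p_A}(Y) = \frac{1}{N} \sum_{i=1}^N \mathbf{1}\{V_i \leq p_A(Y)\}, \qquad \widetilde{U} = \frac{1}{N} \sum_{i=1}^N \mathbf{1}\{V_i \leq U\}.
\]
The conditional law of $\widehat{p_A}(Y)$ given $p_A(Y)$ is exactly $\mathrm{Bin}(N, p_A(Y))/N$, so its marginal is the distribution referred to in the statement, while the monotonicity $p_A(Y) \geq U$ a.s.\ propagates through the indicators to yield $\widehat{p_A}(Y) \geq \widetilde{U}$ almost surely.

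To conclude, I would observe that since $U$ is uniform on $[0,1]$, the integral computation at the end of the proof of Lemma \ref{discrete uniform} (whose only input is the uniformity on $[0,1]$ of the mixing variable) applies verbatim to $\widetilde{U}$ and shows that $\widetilde{U}$ follows the discrete uniform distribution on $\{0, 1/N, \ldots, 1-1/N, 1\}$. The easy direction of Lemma \ref{dominance criterion} then converts the almost-sure inequality $\widehat{p_A}(Y) \geq \widetilde{U}$ into the desired stochastic dominance. The main point that needs care, and really the only non-mechanical step, is the bookkeeping on the enlarged probability space: one must check that the $\widehat{p_A}(Y)$ constructed there has the same marginal law as the statistic of the lemma, which holds because that marginal is determined entirely by the conditional law $\mathrm{Bin}(N, p_A(Y))/N$ together with the marginal law of $p_A(Y)$, both of which are preserved by the coupling.
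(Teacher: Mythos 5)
Your proposal is correct and follows essentially the same route as the paper's proof: both invoke Lemma \ref{dominance criterion} in its reverse direction to couple $p_A(Y)$ with a uniform variable, propagate the almost-sure inequality through a shared set of $N$ i.i.d.\ uniform indicators, identify the marginal of the dominated variable via the Beta-integral computation of Lemma \ref{discrete uniform}, and conclude with the easy direction of Lemma \ref{dominance criterion}. Your $U$, $V_i$, and $\widetilde{U}$ correspond exactly to the paper's $p_A^u$, $U_t$, and $\hat{p}_2$.
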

This lemmas are analogous to (and derived from) corollary \ref{uniform distribution pAY} and lemmas \ref{inequality pAY} respectively. Their proofs are in the supplementary material; that of lemma \ref{discrete stochastic dominance} uses lemma \ref{discrete uniform} and a criterion for stochastic dominance.
\\
Lemma \ref{discrete stochastic dominance} allows us to control type I error: for all $\alpha \in [0,1]$, producing a $\widehat{p_A} (y)$ and then rejecting $H_0 (A)$ if and only if $\widehat{p_A} (y) \leq \alpha$ ensures that the type I error is less than or equal to $\frac{\lfloor{\alpha N}\rfloor + 1}{N + 1} \leq \alpha + \frac{1 - \alpha}{N + 1}$. Conversely, rejecting $H_0 (A)$ if and only if $\widehat{p_A} (y) \leq \alpha - \frac{1 - \alpha}{N}$ ensures a type I error less than or equal to $\alpha$.

The residual terms $\frac{1 - \alpha}{N + 1}$ and $- \frac{1 - \alpha}{N}$ are very small (smaller than the granularity of estimation of $\widehat{p_A} (y)$, which is $1/N$) and can be controlled by the user since $N$ can be chosen as large as desired (at the cost of computational time). It is also possible to adapt the definition of $\widehat{p_A}$ to formally eliminate these residual terms. With:
$$\widehat{p_A}^{+} (y) = \frac{1}{N+1} \left( 1 + \sum_{l=1}^N \mathbf{1} \left\{ \lambda_A (y^{(l)} ) \geq  \lambda_A (y) \right\} \right),$$
the rejection criterion $\widehat{p_A}^{+} (y) \leq \alpha$ ensures error control at level $\alpha$. However, $\widehat{p_A}^{+} (y)$ is a biased estimator of $p_A (y)$.

\section{Case of generalized linear models} \label{section nonlinear}

\subsection{Problem statement} \label{nonlinear problem statement}
Our algorithm is valid in the the linear model. We propose an adaptation of it to two discrete generalized linear models: the binary model $y|X \sim \mathrm{Bernoulli} \left( f(X \beta) \right)$, and the Poisson model $y|X \sim \mathrm{Poisson} \left( f(X \beta) \right)$. The link function $f$ is the logistic function in the binary model and the exponential in the Poisson model. Unlike the linear one, these two models do not have a standard deviation $\sigma$, therefore we write $\beta_A$ or $\widehat{\beta_A}$ instead of $\theta_A$ or $\widehat{\theta_A}$.

As in the linear case, we assume that the variables $X_j, j \in A$ are active, and we want to know whether any variable outside of $A$ is active. The hypotheses $H_0 (A), H_1 (A)$, and the statistic $\lambda_A (y)$, are defined the same way. Again, ideally we would like to estimate $p_A^0 (y) = \mathrm{P}_{H_0 (A)} \left( \lambda_A (Y) \geq  \lambda_A (y) \right)$, which is not possible because the distribution of $Y$ under $H_0 (A)$ is not known. As in the linear case, we propose an algorithm that computes $\widehat{p_A} (y)$, an estimation of a statistic that approximates $p_A^0 (y)$. This algorithm is similar to the algorithm \ref{linear simcal algorithm}, differing only in its calibration method. However, the statistic estimated by $\widehat{p_A} (y)$ in the generalized linear models is not the same as in the linear case and does not have its theoretical properties.

The solution we chose in the linear case was to estimate the conditional p-value, $p_A (y)$, defined in \ref{pAy}. Unfortunately, in discrete models this quantity is no more interesting because the set $E$ of values that $y$ can take is much more restricted: it is finite in the binary model ($E = \{0,1\}^n$) and countable in the Poisson model ($E = \mathbb{N}^n$) while the set $\Theta_A = \mathbb{R}^{|A|}$ of possible parameters is uncountable. One can generally expect the estimator function $\widehat{\beta_A}$, which sends a finite or countable set to an uncountable set, to be injective: only one vector, $y$, should produce exactly the estimate $\widehat{\beta_A} (y)$. Therefore, the condition $\widehat{\beta_A} (Y) = \widehat{\beta_A} (y)$ implies $Y = y$ hence $\lambda_A (Y) = \lambda_A (y)$ so necessarily $p_A (y) = 1$.

However, if $n$ is large enough, the number of values that $y$ (and thus $\widehat{\beta_A} (y)$) can take is very high ($2^n$ in the binary model). So even if only $y' = y$ satisfies $\widehat{\beta_A} (y') = \widehat{\beta_A} (y)$, a large number of distinct vectors $y'$ can satisfy $\widehat{\beta_A} (y') \approx \widehat{\beta_A} (y)$. Therefore, the exact conditioning can --- informally --- be replaced by a "conditioning by $\widehat{\beta_A} (Y) \approx \widehat{\beta_A} (y)$". More formally, this amounts to using a probability distribution $\mathrm{P}_{\widehat{\beta_A} (y)}$ on the set $E$, producing random vectors $Y$ compatible with $H_{0}(A)$ and which satisfy $\widehat{\beta_A} (Y) \approx \widehat{\beta_A} (y)$. This last condition is quantifiable in terms of mean square error:
$$\mathrm{MSE} \left( \mathrm{P}_{\widehat{\beta_A} (y)} \right) = \mathrm{E}_{Y \sim \mathrm{P}_{\widehat{\beta_A} (y)}} \left[ \left( X_A \widehat{\beta_A} (Y) - X_A \widehat{\beta_A} (y) \right)^2 \right]$$
which must be very close to $0$. We will then estimate:
$$\widetilde{p_A} (y) = \mathrm{P}_{Y \sim \mathrm{P}_{\widehat{\beta_A} (y)}} \left(\lambda_A (Y) \geq \lambda_A (y) \right)$$
which is the approximate equivalent in discrete models of the test statistic $p_A (y)$ of the linear model.

In the linear model, we simulated vectors $y^{(l)}$ following the distribution produced by conditioning on $\lambda_A (Y) = \lambda_A (y)$ (a distribution we explicitly determined) and then used the population of $y^{(l)}$ to estimate $p_A(y)$ via Monte Carlo. We simulated vectors following this distribution by using a calibration function on vectors that were first simulated following the linear model under the null hypothesis. Similarly, in discrete models, we propose to simulate $y^{(l)}$ following a distribution $\mathrm{P}_{\widehat{\beta_A} (y)}$ that meets the conditions above but is not explicitly determined, and to use these $y^{(l)}$ to estimate $\widetilde{p_A} (y)$ via Monte Carlo. The simulation of vectors following $\mathrm{P}_{\widehat{\beta_A} (y)}$ is also carried out by applying a calibration procedure to vectors simulated according to the generalized linear model under the null hypothesis. This calibration is, however, more complex than that of the linear model.

\subsection{Calibration in nonlinear models} \label{nonlinear calibration}

Given two parameter vectors $\beta_{A}^{(1)}, \beta_{A}^{(2)} \in \mathbb{R}^n$, we want a calibration procedure from $\beta_{A}^{(1)}$ to $\beta_{A}^{(2)}$ such that:
\begin{conditions}[Desirable properties of calibration] \label{calibration conditions}
For every $y^{(1)} \in E$, denoting $y^{(2)}$ as its calibrated version:
\begin{itemize}
\item[\ref{calibration conditions}.1.]{$y^{(2)}$ should be "close" to $y^{(1)}$ to preserve as much as possible its correlation structure with variables outside of $A$.}
\item[\ref{calibration conditions}.2.]{If $\widehat{\beta_A} (y^{(1)}) = \beta_{A}^{(1)}$ then
$\widehat{\beta_A} ( y^{(2)} ) \approx \beta_{A}^{(2)}$.}
\end{itemize}
\end{conditions}
Condition \ref{calibration conditions}.2 is the informal adaptation of the formal lemma \ref{égalité calibration} satisfied by the linear calibration. The equality between parameters, which is a property too strong in discrete models, is replaced by an approximation.

In linear models, calibration is a function $\mathrm{cal}_{\beta_{A}^{(1)} \rightarrow \beta_{A}^{(2)}} : E \rightarrow E$ involving multiplication by and addition of non-integer constants. In discrete models, this is not possible because the calibration must produce a vector of integer numbers. Our solution is to replace deterministic non-integers with random integers having the same expected value. Therefore, calibration in nonlinear models is a random procedure described by the conditional distribution of the calibrated vector given the initial vector: $Y^{(2)} | Y^{(1)} \sim \mathrm{P}_{\beta_{A}^{(1)} \rightarrow \beta_{A}^{(2)}}$.

\subsubsection{One-step calibration algorithm in nonlinear models}
Below we present a one-step calibration procedure, which follows a conditional distribution denoted $\mathrm{P}^{(1)}_{\beta_{A}^{(1)} \rightarrow \beta_{A}^{(2)}}$. It is a step of the complete procedure that simulates vectors following $\mathrm{P}_{\beta_{A}^{(1)} \rightarrow \beta_{A}^{(2)}}$. In both generalized linear models, it is based on the prediction vectors produced by the initial and target parameters: $e^{(1)} = f \left( X \beta_{A}^{(1)} \right)$ and $e^{(2)} = f \left( X \beta_{A}^{(2)} \right)$.

\begin{algorithm}[Non-linear calibration: one-step algorithm]\label{calibration iteration} Given $Y^{(1)}$, simulating $Y^{(2)} | Y^{(1)} \sim \mathrm{P}^{(1)}_{\beta_{A}^{(1)} \rightarrow \beta_{A}^{(2)}}$ means simulating the vector $Y^{(2)}$ independently for each individual $i$: \begin{itemize}
\item{In the binary model: \begin{eqnarray}
Y^{(2)}_i | Y^{(1)}_i &\sim& \mathrm{Bernoulli} (Z_i) \ \ \text{where:} \nonumber \\
\text{if} \ e^{(2)}_i \leq e^{(1)}_i, \ Z_i &=& \frac{e^{(2)}_i}{e^{(1)}_i} Y^{(1)}_i \nonumber \\
\text{if} \ e^{(2)}_i \geq e^{(1)}_i, \ Z_i &=& 1 - \frac{1 - e^{(2)}_i}{1 - e^{(1)}_i} (1 - Y^{(1)}_i) \nonumber
\end{eqnarray}}
\item{In the Poisson model:
\begin{eqnarray}
Y^{(2)}_i &=& \left\lfloor Z_i \right\rfloor + R_i \ \ \text{where:} \nonumber \\
Z_i &=& \frac{e^{(2)}_i}{e^{(1)}_i} Y^{(1)}_i \nonumber \\
R_i | Y^{(1)}_i &\sim& \mathrm{Bernoulli} \left( Z_i - \left\lfloor Z_i \right\rfloor \right). \nonumber
\end{eqnarray}}
\end{itemize}
\end{algorithm}

\subsubsection{Properties of the one-step calibration algorithm}
The two non-exclusive scenarios of the binary case are consistent because if $e^{(1)}_i = e^{(2)}_i$ then $Y^{(2)}_i = Y^{(1)}_i$ regardless of the formula used.  In both models, $\mathrm{E} \left[Y^{(2)} | Y^{(1)} \right] = Z$ and if $\beta_A^{(1)} = \beta_A^{(2)}$ (the case where no calibration is necessary) then $e^{(1)} = e^{(2)}$ and $Y^{(1)} = Z = Y^{(2)}$.

If an individual $i$ satisfies $e^{(1)}_i \approx e^{(2)}_i$, then the probability that $Y^{(2)}_i = Y^{(1)}_i$ is high. If $\beta_A^{(1)} \approx \beta_A^{(2)}$, this is usually true, thus, as desired (condition \ref{calibration conditions}.1.), calibration does not modify much of $Y^{(1)}$.

Furthermore, the calibration "transforms a vector following the parameters $\beta_A^{(1)}$ into a vector following the parameters $\beta_A^{(2)}$" in the sense that, with $Y^{(1)}$ being random:
\begin{eqnarray} \label{parameter transfer}
\mathrm{E} \left[Y^{(1)} \right] = e^{(1)} \implies \mathrm{E} \left[Y^{(2)} \right] = \mathrm{E} \left[Z \right] = e^{(2)}.
\end{eqnarray}

However, this property does not imply condition \ref{calibration conditions}.2., which is about the estimates $\widehat{\beta_A} (Y^{(1)})$ and $\widehat{\beta_A} (Y^{(2)})$, or, equivalently, the prediction vectors they produce: $\hat{e}^{(m)} = f \left( \widehat{\beta_A} (Y^{(m)}) \right)$, $m = 1$ or $2$. The condition can be verified in the simple case where the model is reduced to an intercept, that is, $A = \emptyset$. The prediction vectors $e^{(1)}$ and $e^{(2)}$ are then constants.

\begin{lemma} \label{single parameter calibration}
In single-parameter models, if $\widehat{\beta_A} (Y^{(1)}) = \beta_{A}^{(1)}$ that is, $\hat{e}^{(1)} = e^{(1)}$, then:
\begin{itemize}
\item $\mathrm{E} \left[ \hat{e}^{(2)} | Y^{(1)} \right] = e^{(2)}$
\item $\mathrm{Var} \left( \hat{e}^{(2)} | Y^{(1)} \right) \leq \frac{1}{n} \left| e^{(2)} - e^{(1)} \right|$ with equality in the binary model.
\end{itemize}
\end{lemma}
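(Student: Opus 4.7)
The plan is to exploit that in a single-parameter model the calibration map acts coordinatewise, so that conditional on $Y^{(1)}$ the calibrated entries $Y_i^{(2)}$ are independent, with each $Y_i^{(2)}$ depending only on $Y_i^{(1)}$. Since $\hat{e}^{(k)}$ is the empirical mean $\tfrac{1}{n}\sum_i Y_i^{(k)}$ (the MLE $\widehat{\beta_A}$ reduces to that mean up to the link in a single-parameter model), both claims reduce to sums of per-coordinate conditional moments.

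For the first bullet---which I read as $\mathrm{E}[\hat{e}^{(2)}\mid Y^{(1)}]=e^{(2)}$, the written $\mathrm{Var}$ appearing to be a typo for $\mathrm{E}$---I would use linearity of expectation, writing $\mathrm{E}[\hat{e}^{(2)}\mid Y^{(1)}] = \tfrac{1}{n}\sum_i \mathrm{E}[Y_i^{(2)}\mid Y_i^{(1)}]$. The calibration map is constructed precisely so that each $\mathrm{E}[Y_i^{(2)}\mid Y_i^{(1)}=y]$ is an affine function of $y$ whose sample average collapses to $e^{(2)}$ as soon as the empirical mean of $Y^{(1)}$ equals $e^{(1)}$; the hypothesis $\hat{e}^{(1)}=e^{(1)}$ is exactly what drives this collapse.

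For the variance bullet, conditional independence yields
\[
\mathrm{Var}\bigl(\hat{e}^{(2)}\mid Y^{(1)}\bigr) \;=\; \frac{1}{n^{2}}\sum_{i=1}^{n}\mathrm{Var}\bigl(Y_i^{(2)}\mid Y_i^{(1)}\bigr).
\]
In the binary case I would carry out a two-case analysis on $Y_i^{(1)}\in\{0,1\}$: WLOG $e^{(2)}\ge e^{(1)}$, so only the $n(1-e^{(1)})$ observations with $Y_i^{(1)}=0$ are subject to a flip, each becoming $1$ independently with probability $p_{+}=(e^{(2)}-e^{(1)})/(1-e^{(1)})$. Summing the Bernoulli variances $p_{+}(1-p_{+})$ over those indices, under the constraint $\hat{e}^{(1)}=e^{(1)}$, produces the claimed equality $\tfrac{1}{n}|e^{(2)}-e^{(1)}|$ (up to the normalisation implicit in the paper's definition of $\hat{e}$). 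For a general single-parameter model, the same per-coordinate scheme yields a conditional variance that is dominated by the binary counterpart, giving the stated inequality.

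The main obstacle will be establishing the per-coordinate variance bound uniformly across single-parameter models. The binary case is a direct Bernoulli calculation, and the Poisson-style calibration $Y_i^{(2)}=Y_i^{(1)}+\eta_i$ with $\eta_i\sim\mathrm{Poisson}(e^{(2)}-e^{(1)})$ also saturates the bound by a one-line computation. A clean argument for the general inequality most likely proceeds either by exhibiting a dominating coupling of the single-parameter calibration to the binary one, or by a model-by-model check combined with the observation that each link function controls the conditional variance by $|e^{(2)}-e^{(1)}|$.
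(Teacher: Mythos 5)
Your skeleton coincides with the paper's: in a single-parameter model the MLE reduces to the empirical mean, the calibrated coordinates are conditionally independent given $Y^{(1)}$, the first bullet is indeed $\mathrm{E}[\hat{e}^{(2)}\mid Y^{(1)}]=e^{(2)}$ (the $\mathrm{Var}$ in the statement is a typo), and it follows from the affine form of the per-coordinate target $Z_i$ together with $\overline{Y^{(1)}}=e^{(1)}$. The binary variance computation is also the paper's (two cases on $Y_i^{(1)}$, sum of Bernoulli variances).

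The genuine gap is the Poisson case. You posit an additive calibration $Y_i^{(2)}=Y_i^{(1)}+\eta_i$ with $\eta_i\sim\mathrm{Poisson}(e^{(2)}-e^{(1)})$, but that is not the construction the lemma is about: it is unavailable when $e^{(2)}<e^{(1)}$, and it does not have conditional mean $\frac{e^{(2)}}{e^{(1)}}Y_i^{(1)}$, which is what your own expectation step needs. In the paper the Poisson calibration rescales, $Z_i=\frac{e^{(2)}}{e^{(1)}}Y_i^{(1)}$, and randomizes the rounding, so $\mathrm{Var}(Y_i^{(2)}\mid Y^{(1)})=T_i(1-T_i)$ with $T_i=Z_i-\lfloor Z_i\rfloor$; the entire content of the Poisson bound is the chain $T_i(1-T_i)\le\min(T_i,1-T_i)\le|Z_i-Y_i^{(1)}|=\bigl|\frac{e^{(2)}}{e^{(1)}}-1\bigr|Y_i^{(1)}$, which exploits that $Y_i^{(1)}$ is an integer. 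Your fallback for ``general single-parameter models'' (a dominating coupling to the binary case) is not carried out and is not what the paper does; only the binary and Poisson calibrations are treated. A further caveat on the claimed binary equality: a $\mathrm{Bernoulli}(Z_i)$ coordinate has conditional variance $Z_i(1-Z_i)$, and summing it for $e^{(2)}\ge e^{(1)}$ gives $\frac{1}{n}(e^{(2)}-e^{(1)})\frac{1-e^{(2)}}{1-e^{(1)}}$ rather than $\frac{1}{n}(e^{(2)}-e^{(1)})$; your $p_{+}(1-p_{+})$ sum has exactly the same feature, so it yields an inequality, not the asserted equality (the paper's own display drops the factor $\frac{1-e^{(2)}}{1-e^{(1)}}$ at the corresponding step, so this is a defect you have inherited rather than introduced).
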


We prove it in the supplementary material. In single-parameter models, $\mathrm{P}^{(1)}_{\widehat{\beta_{A}} (Y^{(1)}) \rightarrow \beta_{A}^{(2)}}$ thus produces a distribution of $Y^{(2)}$ where $\hat{e}^{(2)}$ is distributed around $e^{(2)}$ with low variance. Therefore, by applying $f^{-1}$, $\widehat{\beta_{A}} (Y^{(2)})$ is distributed around the target value $\beta^{(2)}$ with low variance. This result, combined with the parameter transfer property from $\beta^{(1)}$ to $\beta^{(2)}$ \eqref{parameter transfer} which is valid in multi-parameter models, suggests that the distribution of $\widehat{\beta_{A}} (Y^{(2)})$ under $\mathrm{P}^{(1)}_{\widehat{\beta_{A}} (Y^{(1)}) \rightarrow \beta_{A}^{(2)}}$ has a similar property in multi-parameter models.

\subsubsection{Iterative calibration algorithm}
The  variance inequality of lemma \ref{single parameter calibration} indicates that, at least in single-parameter models, $\hat{e}^{(2)}$ (or $\widehat{\beta_{A}} (Y^{(2)})$) approaches its target with greater precision when the initial vector ($\hat{e}^{(1)}$ or $\widehat{\beta_{A}} (Y^{(1)})$) is already closer to it. It is therefore useful to reapply the calibration procedure following a distribution $\mathrm{P}^{(1)}_{\cdot \ \rightarrow \beta_{A}^{(2)}}$ to an already calibrated vector. We thus define of $\mathrm{P}_{\beta_{A}^{(1)} \rightarrow \beta_{A}^{(cal)}}$, the conditional distribution describing the complete calibration procedure, via the following simulation algorithm.

\begin{algorithm}[Non-linear calibration: iterative algorithm]\label{nonlinear iterative calibration} Let $\beta_{A}^{(1)}, \beta_{A}^{(cal)} \in \Theta$ be parameter vectors and $y^{(1)}$ a response vector. Simulating $Y^{(cal)} | y^{(1)} \sim \mathrm{P}_{\beta_{A}^{(1)} \rightarrow \beta_{A}^{(cal)}}$ means, for a certain number of iterations $k = 1, 2, ..$:
\begin{itemize}
\item[1.] Simulate a vector $y^{(k+1)}$ using $y^{(k)}$ following the distribution $Y^{(k+1)} | Y^{(k)} \sim \mathrm{P}^{(1)}_{\beta_{A}^{(k)} \rightarrow \beta_{A}^{(cal)}}$, with algorithm \ref{calibration iteration}.
\item[2.] Compute the maximum likelihood estimator, and define $\beta_{A}^{(k+1)} = \widehat{\beta_A} (y^{(k+1)})$.
\item[3.] Compute the empirical mean squared error
$\mathrm{MSE}(k+1) = {|| X_A \beta_{A}^{(k+1)} - X_A \beta_{A}^{(cal)} ||}_2^2$. If $\mathrm{MSE}(k+1) > \mathrm{MSE}(k)$, reject the $y^{(k+1)}$ that has just been simulated and instead define $y^{(k+1)} = y^{(k)}, \ \beta^{(k+1)} = \beta^{(k)}$. If this occurs 3 times consecutively, end the iterations.
\end{itemize}
The value of $y^{(k)}$ at the time of stopping the iterations is then taken as $Y^{(cal)}$.
\end{algorithm}

\subsection{Simulation-calibration testing algorithm in generalized linear models}

With the calibration algorithms \ref{calibration iteration} and \ref{nonlinear iterative calibration}, we can propose the algorithm for generating the empirical conditional p-value $\widehat{p_A} (y)$ in binary and Poisson generalized linear models. It is similar in form to algorithm \ref{linear simcal algorithm} for linear models, with two main differences:
\begin{itemize}
\item theoretically, the quantity estimated by $\widehat{p_A} (y)$ is not the same conditional p-value, $\widetilde{p_A}(Y)$ instead of $p_A(Y)$ (see \ref{nonlinear problem statement});
\item algorithmically, the calibration step is more complex, using the sub-algorithms \ref{calibration iteration} and \ref{nonlinear iterative calibration}.
\end{itemize}

\begin{algorithm}[Estimation of $\widetilde{p_A}(Y)$ by simulation-calibration]\label{nonlinear simcal algorithm}
The four steps are as follows:
\begin{itemize}
\item[1]{Compute $\widehat{\beta_A} (y)$, the parameter vector of the model restricted to $A$ estimated by maximum likelihood.}
\item[2]{Simulate $N$ independent response vectors $y^{(1)}, .., y^{(N)}$ each following a distribution denoted $P_{\widehat{\beta_A} (y)}$. This means, for each $l = 1, .. ,N$:
\begin{itemize}
\item[2.1]{Simulate $y^{sim}$ according to the generalized linear model with the parameter vector $\widehat{\beta_A} (y)$:
\begin{eqnarray}
y^{sim} &\sim& \mathrm{Bernoulli} \left(\frac{1}{1 + \exp(-X_A \widehat{\beta_A} (y))} \right) \ \text{or} \nonumber \\
y^{sim} &\sim& \mathrm{Poisson} \left(\exp( X_A \widehat{\beta_A} (y) ) \right). \nonumber
\end{eqnarray}}
\item[2.2]{Compute the maximum likelihood estimate $\widehat{\beta_A} (y^{sim})$.}
\item[2.3]{Generate a calibrated version of $y^{sim}$ towards $\widehat{\beta_A} (y)$ following algorithm \ref{nonlinear iterative calibration}:
$$y^{(l)} | y^{sim} \sim \mathrm{P}_{\widehat{\beta_A} (y^{sim}) \rightarrow \widehat{\beta_A} (y)}.$$}
\end{itemize}}
\item[3]{For each $l = 1, .., N$, perform the Lasso regression of $y^{(l)}$ on $X$, and compute $\lambda_A (y^{(l)})$.}
\item[4]{Compute the empirical conditional p-value:
$$\widehat{p_A} (y) = \frac{1}{N} \sum_{l=1}^N \mathbf{1} \left\{ \lambda_A (y^{(l)} ) \geq  \lambda_A (y) \right\}.$$ }
\end{itemize}
\end{algorithm}

$\widehat{p_A} (y)$ is intended to be interpreted the same way in linear and generalized linear models. However, the properties relating to its distribution under the null hypothesis (lemmas \ref{discrete uniform} and \ref{discrete stochastic dominance}) are only demonstrated in the linear case. The construction of $\widehat{p_A} (y)$ in generalized linear models is designed so that these properties are approximately valid there.

\section{Variable selection procedure} \label{section proc simcal}

The simulation-calibration test measures the significance of a covariate selected by the Lasso. When no active variables are known a priori and a complete model needs to be selected, it is necessary to use the test iteratively.

\subsection{Notations and algorithm}
For every $k \in \{1, .., p\}$, let $j_k$ be the index of the $k$-th variable selected by the Lasso and let $A_k = \{ j_1, .., j_k \}$ be the set of the first $k$ variables selected, with $A_0 = \emptyset$. Each $j_k$ is the first variable selected outside the set $A_{k-1}$, that is, following to the notation introduced in \ref{problem statement simcal}, $j_k = j_{A_{k-1}}$. This is a slight variant of the Lasso path where variables do not "reenter": if there is a $\lambda$ such that $\widehat{\beta_j}^{Lasso} (\lambda) \neq 0$ (which translates into a $k$ such that $j \in A_k$), then by definition $j$ belongs to all $A_{k'}, k' > k$ even when there are $\lambda' < \lambda$ such that $\widehat{\beta_j}^{Lasso} (\lambda') = 0$.

The simulation-calibration test of $H_0 \left(A_{k-1} \right)$ thus measures the significance of the variable $j_k$. Furthermore, since variables do not "reenter" the sequence of sets, the null hypotheses $H_0 \left(A_{k} \right)$ become progressively weaker: if $k < k'$, then $A_k \subset A_{k'}$ and since the hypotheses concern the complements of these sets (on which they assert the nullity of $\beta$), $H_0 \left(A_{k} \right)$ implies $H_0 \left(A_{k'} \right)$.

\begin{algorithm}[Simple variable selection procedure by simulation-calibration]\label{algorithme simcal itere}
For $k = 1, 2, ..$
\begin{itemize}
\item[1.]{Compute $p_k = \widehat{p_{A_{k-1}}} (y)$ by simulation-calibration (algorithm \ref{linear simcal algorithm} in the linear case, or \ref{nonlinear simcal algorithm} in the nonlinear cases).}
\item[2.]{\begin{itemize}
    \item If $p_k \leq \alpha$, then continue the algorithm;
    \item otherwise, select $A_{k-1}$ and halt the algorithm.
\end{itemize}}
\end{itemize}
\end{algorithm}

\subsection{Choice of the halting criterion}\label{ForwardStop}

In this procedure, we carry out several hypothesis tests. However, since the procedure halts at the first test where the null hypothesis is not rejected, it cannot be considered a multiple testing procedure per se, in which a potentially high number of tests satisfy their null hypothesis. Therefore, it is not advisable to make the rejection threshold more stringent as done by procedures like Bonferroni and Benjamini and Hochberg \citep{benjamini_controlling_1995}, which are used to control the risks of false positives deriving from the large number of p-values generated under the null hypothesis.

However, it is possible to adapt the halting criterion of the procedure to the fact that it evaluates a sequence of ordered tests. In the general problem of a sequence of ordered tests whose p-values $p_1, .., p_m$ are measured, \cite{gsell_sequential_2016} proposed the \textit{ForwardStop} criterion. This rejects the first $\hat{k}_F$ tests where:
\begin{eqnarray}
\hat{k}_F &=& \max \left\{ k \in \{1, .., m \} : p_k^{FS} \leq \alpha \right\} \nonumber \\
p_k^{FS} &=& -\frac{1}{k} \sum_{i = 1}^{k} \log(1 - p_i). \nonumber
\end{eqnarray}

This criterion has the advantage that, in the vast majority of cases, one can be compute it knowing only the first values of the sequence $(p_k)$, because the sequence $(p_k^{FS})$ is calculated from $p_i, i \leq k$ and it generally increases with $k$. In practice, when we apply ForwardStop, we select the first $\hat{k}_F'$ variables, where:
$${\hat{k}_F}' = \min \left\{ k \in \{1, .., m \} :  p_k^{FS} > \alpha \right\} - 1$$
which amounts to replacing the condition to continue the algorithm at step 2 with $p_k^{FS} \leq \alpha$.

In contrast to ForwardStop, the simple criterion $p_k \leq \alpha$ is called \textit{thresholding}. Thresholding controls both the FWER and the FDR at level $\alpha$ \citep{marcus_closed_1976}, while ForwardStop, generally less conservative, controls the FDR at level $\alpha$ \citep{gsell_sequential_2016}.

These control results apply to the type I error, that is, incorrect rejection of the null hypothesis when it is true. However, in the sequential procedure, we often test $H_0 (A)$ hypotheses that are not true: notably at the first step of the procedure, where $A = \emptyset$ and it is sufficient that there is an active variable for $H_0 (A)$ to be incorrect. It is possible that there are active variables outside of $A$ but that they are selected by the Lasso only at low values of $\lambda$ and that the first variable selected outside of $A$ is an inactive variable. In this case, retaining this variable constitutes a false positive from the perspective of variable selection, but not a type I error since $H_0 (A)$ is not true. Section \ref{extended theorem} presents a result controlling this occurrence.

\section{Extended theorem: control of the selection error} \label{extended theorem}
We divide the set of covariates into three disjoint subsets: $\{1, .. ,p \} = A \cup B \cup C$ with $\forall j \in C, \beta_j = 0$.

We perform the simulation-calibration test of the hypothesis $H_0 (A)$ with the goal of variable selection, that is, if the test rejects $H_0 (A)$, we select the (almost certainly unique) variable $j_A (Y) \in B\cup C$ such that $\widehat{\beta_{j_A (Y)}}^{Lasso} (\lambda) \neq 0$ for $\lambda$ near $\lambda_A$. This selection is a true positive if $j_A (Y) \in B$ and a false positive if $j_A (Y) \in C$.
Since $H_0 (A)$ is not true (because there may exist $j \in B$ such that $\beta_j \neq 0$), the results from the previous sections do not apply. However, we have the following result:
\begin{theorem}
Assume that the active variables are orthogonal to the inactive variables, that is:
$$X_C^T X_{A\cup B}=0.$$
Then, the simulation-calibration test of $H_0 (A)$ at level $\alpha$ has a probability less than $\alpha + \frac{1 - \alpha}{N+1}$ of selecting a false positive.
\end{theorem}
We prove it in the supplementary material.
\\
Like under the null hypothesis (section \ref{prop simcal random y}), this result can also be understood as controlling at the level $\alpha$ the risk of selecting a false positive if the selection criterion is $\widehat{p_A} (y) \leq \alpha - \frac{1 - \alpha}{N}$.
\\
The theorem renders applicable the properties of control of the FWER and FDR by the variable selection procedure with thresholding and ForwardStop respectively, provided that the active variables are orthogonal to the inactive variables.

\section{Simulation studies} \label{simulation studies}
To assess the performance of the test by simulation-calibration, we conducted two simulation studies.

First, we measured the distribution of the p-value generated under $H_0 (A)$ in a large number of different scenarios. By construction, this distribution is supposed to be the uniform on $[0,1]$. More precisely, we proved in Lemma \ref{discrete uniform} that in the linear model, under the continuity assumption, $\widehat{p_A} (Y)$ follows the uniform distribution on a discrete analogue of $[0,1]$, and in Lemma \ref{discrete stochastic dominance} that by relaxing the continuity assumption, in the linear model $\widehat{p_A} (Y)$ stochastically dominates the uniform distribution. However, we do not have equivalent theorems in generalized linear models, although the properties of nonlinear calibration demonstrated in section \ref{nonlinear calibration} suggest that a distribution close to the uniform is also to be expected in these models.

We also measured the performance of the variable selection procedures —-- with both the thresholding and ForwardStop halting criteria —-- using three usual metrics of variable selection: family-wise error rate (FWER), false discovery rate (FDR), and sensitivity. Due to the properties of thresholding and ForwardStop, we expect control of the FWER with the former and of the FDR with the latter, at least under the conditions where the extended theorem applies: in the linear model and with no correlation between covariates. Additionally, we compared these performances with those of an equivalent procedure based on the CovTest by \cite{lockhart_significance_2014}.

\subsection{Simulation plan}
The following simulation plan is shared by the study of the p-value's distribution under the null and by that of the variable selection procedure. We simulate $n_{\mathrm{sim}} = 500$ data sets for each of the 252 parameter sets (or scenarios). In all cases, the number of observations is $n = 1000$ and the number of covariates is $p = 500$. The parameters which vary across scenarios are:
\begin{itemize}
\item{The type of model: linear, binary with dense data, binary with sparse data, or Poisson. In binary models, dense data means that $\mathrm{E}[Y|X = 0] = 0.5$ and sparse data that $\mathrm{E}[Y|X = 0] = 0.1$, with the value of $\beta_0$ distinguishing these two cases.}
\item{The correlation matrix used to simulate the regressors: a Toeplitz matrix of coefficients $\rho_{(i,j)}=\rho^{\lvert i - j \rvert}$, with $\rho=0$, $\rho=0.9$, or $\rho=0.99$. These high values are used by \cite{sabourin_permutation_2015} in their simulation study.}
\item{The number of active variables: $0, 1, 2, 5$, or $10$. They are drawn uniformly among the 500 covariates.}
\item{For scenarios with at least one active variable, the signal-to-noise ratio (SNR). This quantity is inspired by the SNR used by Sabourin et al. Unlike them, we use an empirical version which is defined for each generalised linear model. This is the ratio between the empirical variance of the signals ($E_{\beta}[Y_i \mid X], i = 1, ... , n$) and the empirical mean of the variances of the noises ($\mathrm{Var}_{\beta}(Y_i \mid X), i = 1, ... , n$): $E_{\beta}[Y_i \mid X], i = 1, ... , n$):
$$\
\mathrm{SNR}(X, \beta) = \frac{\frac{1}{n-1} \Sigma_{i=1}^n \left(E_{\beta}[Y_i \mid X] - \frac{1}{n} \Sigma_{i=1}^n E_{\beta}[Y_i \mid X] \right)^2 }{\frac{1}{n} \Sigma_{i=1}^n \mathrm{Var}_{\beta}(Y_i \mid X)}.$$ 
A higher SNR means that the impact of each active variable on $Y$ is more easily observable, which should make variable selection more effective. Scenarios with $0$ active variables necessarily have a zero signal-to-noise ratio. We set $\mathrm{SNR}(X, \beta) = 1, 0.3, 0.1, 0.03, \ \text{or} \ 0.01$.}
\end{itemize}

\subsection{p-value under the null hypothesis} \label{etude simu simcal H0}

To verify that the test follows its expected behavior under the null hypothesis, we assume that the set $A$ of $0$ to $10$ active regressors is known and we test $H_0 (A)$. For each scenario, in each of the $n_{\mathrm{sim}} = 500$ data sets $s$ characterized by $A_s$, $X_s$, and $Y_s$, we produce a $\hat{p}_s = \widehat{p_{A_s}} (Y_s, X_s)$ using the simulation-calibration algorithm based on $N = 100$ simulations of calibrated response vectors.
\\
In one of the scenarios (linear model, $\rho = 0.99$, 1 active regressor, $\mathrm{SNR} = 1$), as an example, we also produced in each of the 500 data sets a naive estimate of the unconditional p-value $p_A^0 (y)$ (see section \ref{conditional pvalue}). It is obtained by Monte Carlo without the calibration step, i.e. by applying the algorithm \ref{linear simcal algorithm} without the calibration step (2.3.), taking $y^{(l)} = y^{sim}$ with $\theta_A^{sim} = \widehat{\theta_A} (y)$. This intends to illustrate the impact of calibration on the distribution of the p-values.
\\
In each scenario, we observe whether the population of ${\left( \hat{p}_s \right)}_{1 \leq s \leq n_{\mathrm{sim}}}$ is distinguishable from a sample drawn the uniform distribution on $[0,1]$. Graphically, agreement or disagreement with the uniform distribution is observed on quantile-quantile (q-q) plots where for each $s = 1, .., n_{\mathrm{sim}}$, the $s$-th smallest value $\hat{p}_{(s)}$ is represented at the coordinates $\left( \frac{s}{n_{\mathrm{sim}}}, \hat{p}_{(s)} \right)$.
\\
The diagrams in the supplementary material illustrate the necessity of the calibration step for the empirical p-values produced to be valid. In this example (linear model, $\rho = 0.99$, $1$ active variable, $\mathrm{SNR} = 0.1$), p-values produced by simulation-calibration are compatible with the uniform distribution. In contrast, p-values produced without calibration deviate significantly, the smallest p-value among 500 being $0.13$. At usual test levels, false positives are thus practically impossible instead of being possible with a controlled probability, suggesting a very low power of the test. At higher test levels, type I error is no longer controlled.
\\
To systematically evaluate the adequacy of the $\hat{p}_s$ to the uniform distribution across all scenarios, we used the Kolmogorov-Smirnov (K-S) test on the population of ${\left( \hat{p}_s \right)}_{1 \leq s \leq n_{\mathrm{sim}}}$ in each scenario. We conducted the test in its two-sided version, where the null hypothesis is that the population considered is a sample from the target distribution (here the uniform distribution on $[0,1]$), and in one of its one-sided versions having a less strict null hypothesis: that the population is sampled from a distribution which has stochastically dominates the target distribution. We are interested in this one-sided test because, as seen in the conclusion of section \ref{prop simcal random y}, basing a test on an empirical p-value whose distribution dominates the uniform distribution on $[0,1]$ allows control of its type I error.
\\
Results vary across the model type: linear, dense binary, sparse binary, or Poisson. For each of the four types and each variant of the Kolmogorov-Smirnov test, we used the Bonferroni correction on all the p-values of the 63 K-S tests applied to the scenarios of that model type.

\begin{figure}
    \centering
    \makebox[\textwidth][c]{\includegraphics[width=\textwidth]{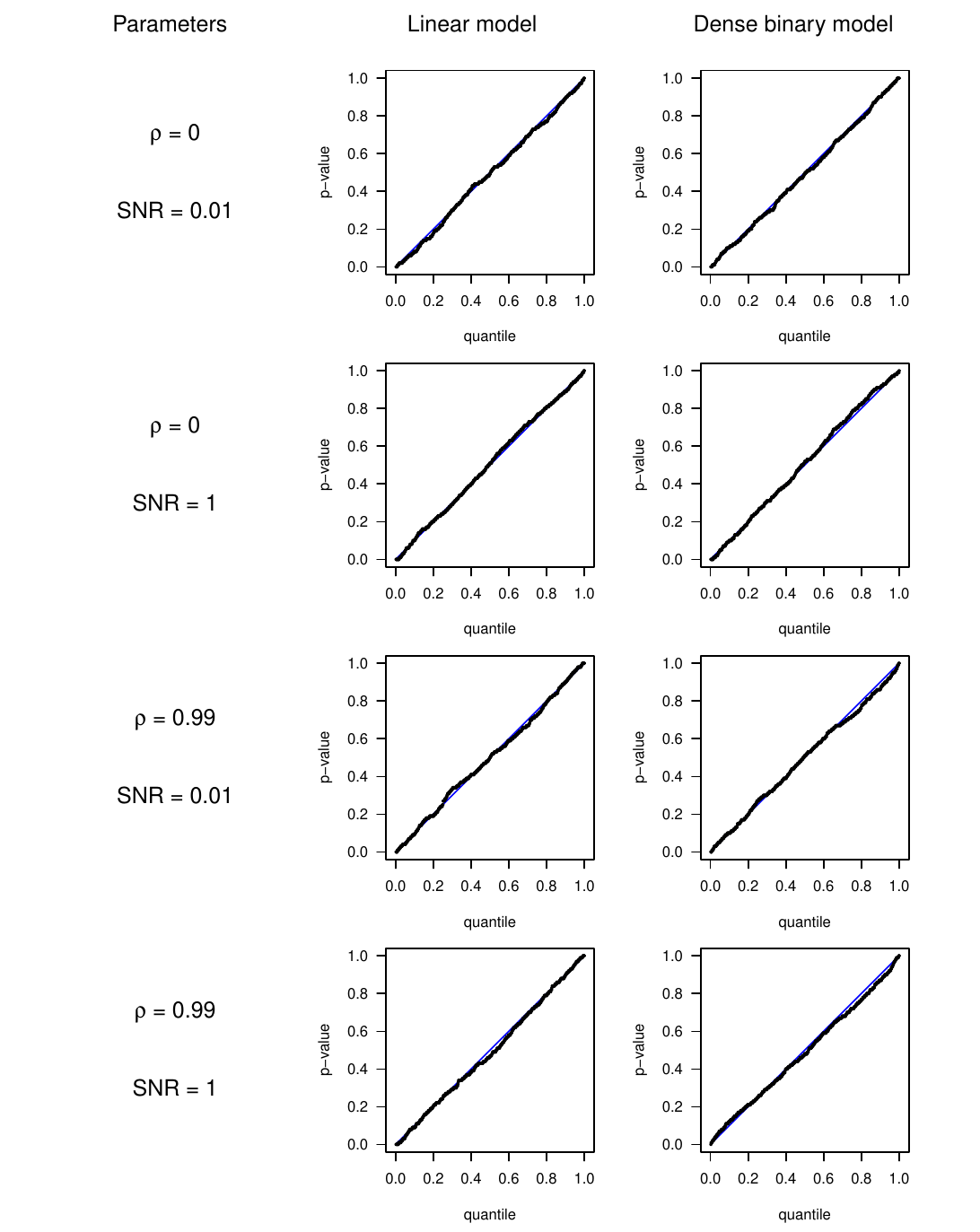}}
    \caption{Quantile-quantile diagrams of 500 empirical p-values generated by simulation-calibration in 8 scenarios of linear model or dense binary model with 10 active variables.}
    \label{fig:simcal-result-H0-1}
\end{figure}

\begin{figure}
    \centering
    \makebox[\textwidth][c]{\includegraphics[width=\textwidth]{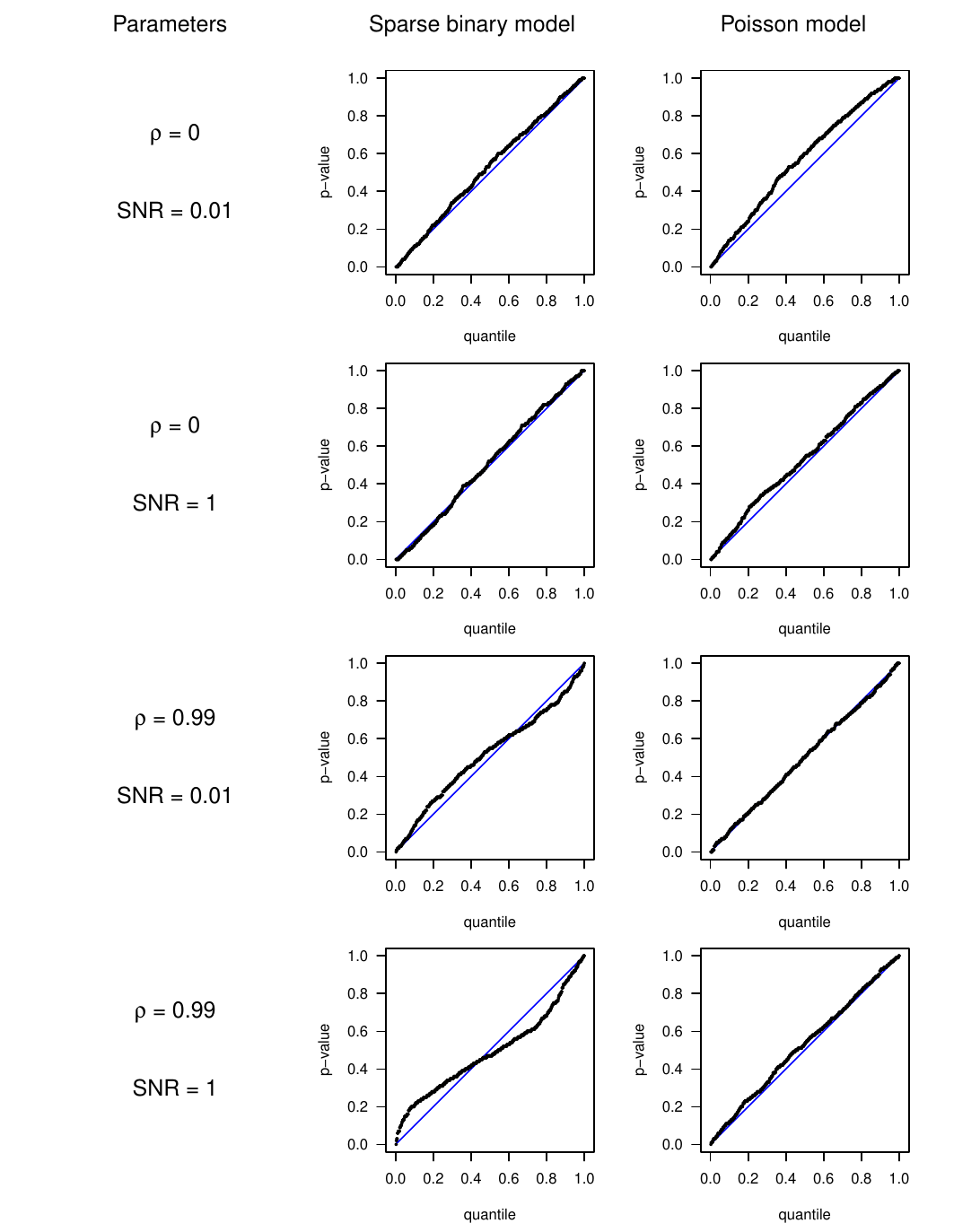}}
    \caption{Quantile-quantile diagrams of 500 empirical p-values generated by simulation-calibration in 8 scenarios of sparse binary model or Poisson model with 10 active variables.}
    \label{fig:simcal-result-H0-2}
\end{figure}

In the linear and dense binary models, the bilateral K-S tests do not reject the hypothesis that each population of p-values follows a uniform distribution over $[0,1]$. The smallest p-value is $0.0112$ among the linear scenarios and $0.0546$ among the balanced binary scenarios, which, combined with the Bonferroni correction for 63 tests, does not lead to rejection at usual test levels ($63 \times 0.0112 = 0.706$). Figure \ref{fig:simcal-result-H0-1} illustrates this general adequacy with the uniform distribution across eight example scenarios.
\\
In the sparse binary model, there are scenarios where the distribution of generated p-values significantly deviates from the uniform distribution and does not dominate it. At the $0.1$ level and with the Bonferroni correction, the null hypothesis of the K-S test (unilateral as well as bilateral) is rejected in two out of 63 scenarios: those with maximum correlation among covariates ($\rho = 0.99$) and the number of active regressors ($|A| = 10$), and where the signal-to-noise ratio is high ($\mathrm{SNR} = 0.3$ or $\mathrm{SNR} = 1$). The p-values of the bilateral K-S test in these scenarios are respectively $p = 2.86 \times 10^{-6} = 1.81 \times 10^{-4} / 63$ and $p = 2.55 \times 10^{-7} = 1.61 \times 10^{-5} / 63$. Despite this non-adequacy, the q-q plots (Figure \ref{fig:simcal-result-H0-2}, which includes the $\mathrm{SNR} = 1$ scenario where the deviation from the uniform distribution is most pronounced) show that $P_{H_0 (A)}(\hat{p}_A \leq \alpha) \leq \alpha$ at low values of $\alpha$, the opposite occurring only when $\alpha$ exceeds about $0.4$. Then, in practice, we have a more conservative control of the type I error than the nominal level at usual test levels. In the most divergent scenario, the diagram even shows a type I error significantly lower than its expected level, signaling a loss of test power. However, this effect is considerably less pronounced than in the uncalibrated simulation example.
\\
In the Poisson model, the unilateral K-S tests do not reject the hypothesis that each population of p-values stochastically dominates the uniform distribution over $[0,1]$ (the minimum p-value across unilateral tests is $0.0214 = 1.35/63$). Thus, FWER control is at least as conservative as the nominal level at all test levels. However, at the $0.1$ level and with the Bonferroni correction, the bilateral K-S tests reject the hypothesis of a uniform distribution of p-values in 5 out of 63 scenarios. These scenarios present a different profile from those where deviations from the uniform distribution were observed in the sparse binary case: 4 out of 5 are scenarios with zero correlation, all have an SNR of $0.1$ or less, and the minimum p-value ($p = 1.79 \times 10^{-6} = 1.13 \times 10^{-4} / 63$) is reached at zero correlation, minimum signal-to-noise ratio ($\mathrm{SNR} = 0.01$), and $|A| = 10$ known active regressors. Moreover, as seen in Figure \ref{fig:simcal-result-H0-2}, where the scenario with the minimum K-S test p-value appears, even then the deviation from the uniform is small, especially at the small quantiles. The actual level of first type error is therefore very close to its nominal level at its usual values.

\subsection{Variable selection procedure}

In this simulation study, unlike in section \ref{etude simu simcal H0}, we did not assume the set of active regressors to be known. This set is estimated by the sequential procedure described in algorithm \ref{algorithme simcal itere}. The number of simulations in the calculation of empirical p-values is higher, $N = 500$, which allows for a more precise estimation of p-values.
\\
At first, we applied the algorithm with the thresholding halting criterion at $\alpha = 0.95$, a high level that produces on each dataset a relatively complete sequence of variables that may be selected, along with their associated p-values. In a second step, we used the variable selection procedure on these sequences of p-values for each $\alpha$ in a relatively dense grid of values (from $0.01$ to $0.5$, with a step of $0.01$), with both thresholding and ForwardStop. In this second step, we observe the procedure's performance on a large number of values of $\alpha$ in a computationally affordable manner since it does not require recomputing the empirical p-values through simulation-calibration, as this was done in the first step.

\subsubsection{Control of FWER and FDR}

Figure \ref{fig:procedure-simcal-FWER-1} and \ref{fig:procedure-simcal-FDR-1} show how the FWER and FDR of the thresholding and ForwardStop procedures vary with $\alpha$ in the same 8 example scenarios as in Figure \ref{fig:simcal-result-H0-1}: 10 active variables, either zero or maximal correlation between covariates, minimal or maximal SNR, linear or dense binary model. In the the supplementary material, Figures 1 and 2 of  show the variation of the FWER and FDR of both procedures with $\alpha$ in the 8 scenarios of Figure \ref{fig:simcal-result-H0-1} (sparse binary and Poisson models with the same other parameters) while Figures 3 and 4 display the variations of their sensitivity in those 16 scenarios.

\begin{figure}
    \centering
    \makebox[\textwidth][c]{\includegraphics[width=\textwidth]{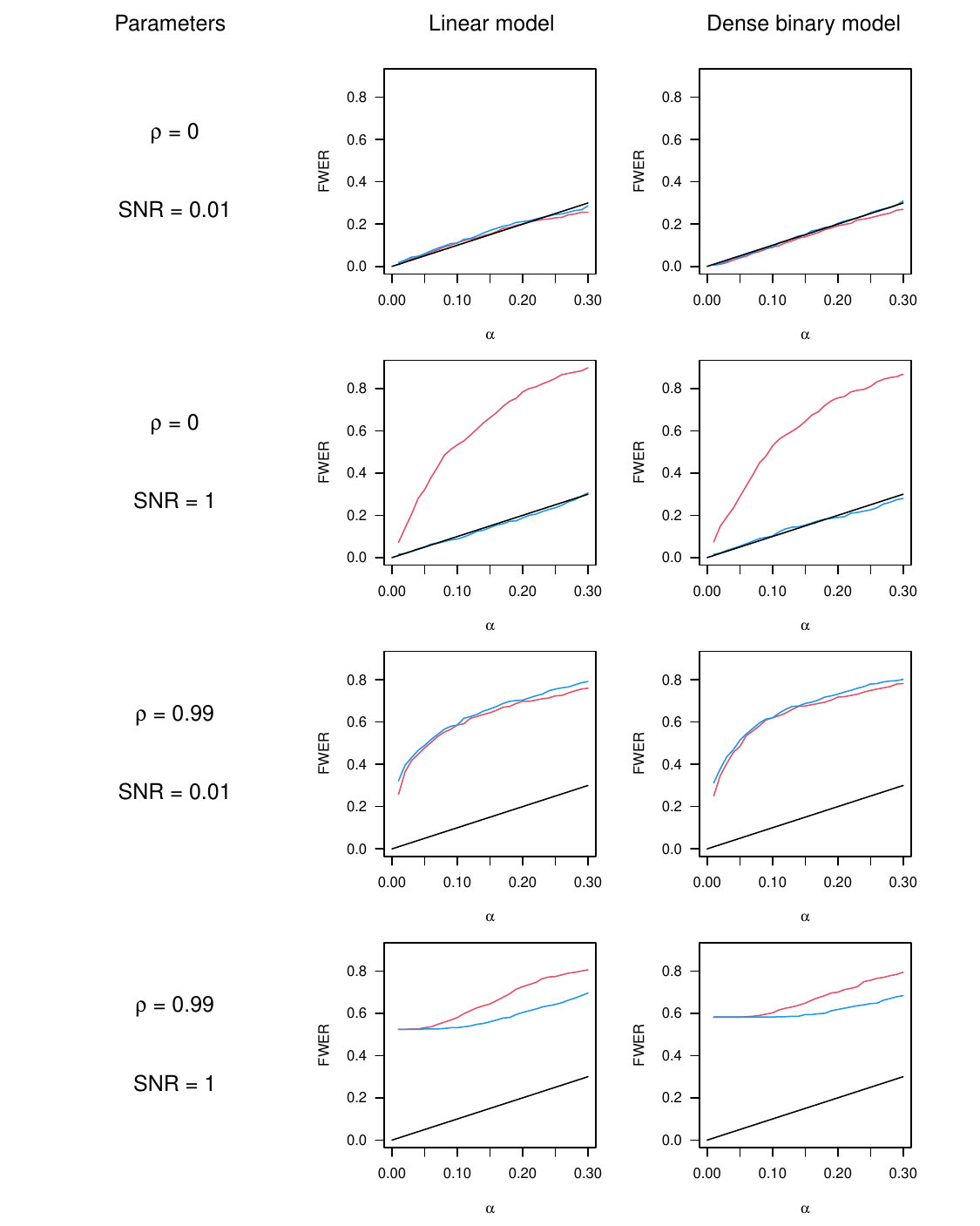}}
    \caption{FWER of the variable selection procedure with thresholding (blue) or \textit{ForwardStop} (red) as a function of $\alpha$ in 8 scenarios of linear model or dense binary model with 10 active variables.}
    \label{fig:procedure-simcal-FWER-1}
\end{figure}

\begin{figure}
    \centering
    \makebox[\textwidth][c]{\includegraphics[width=\textwidth]{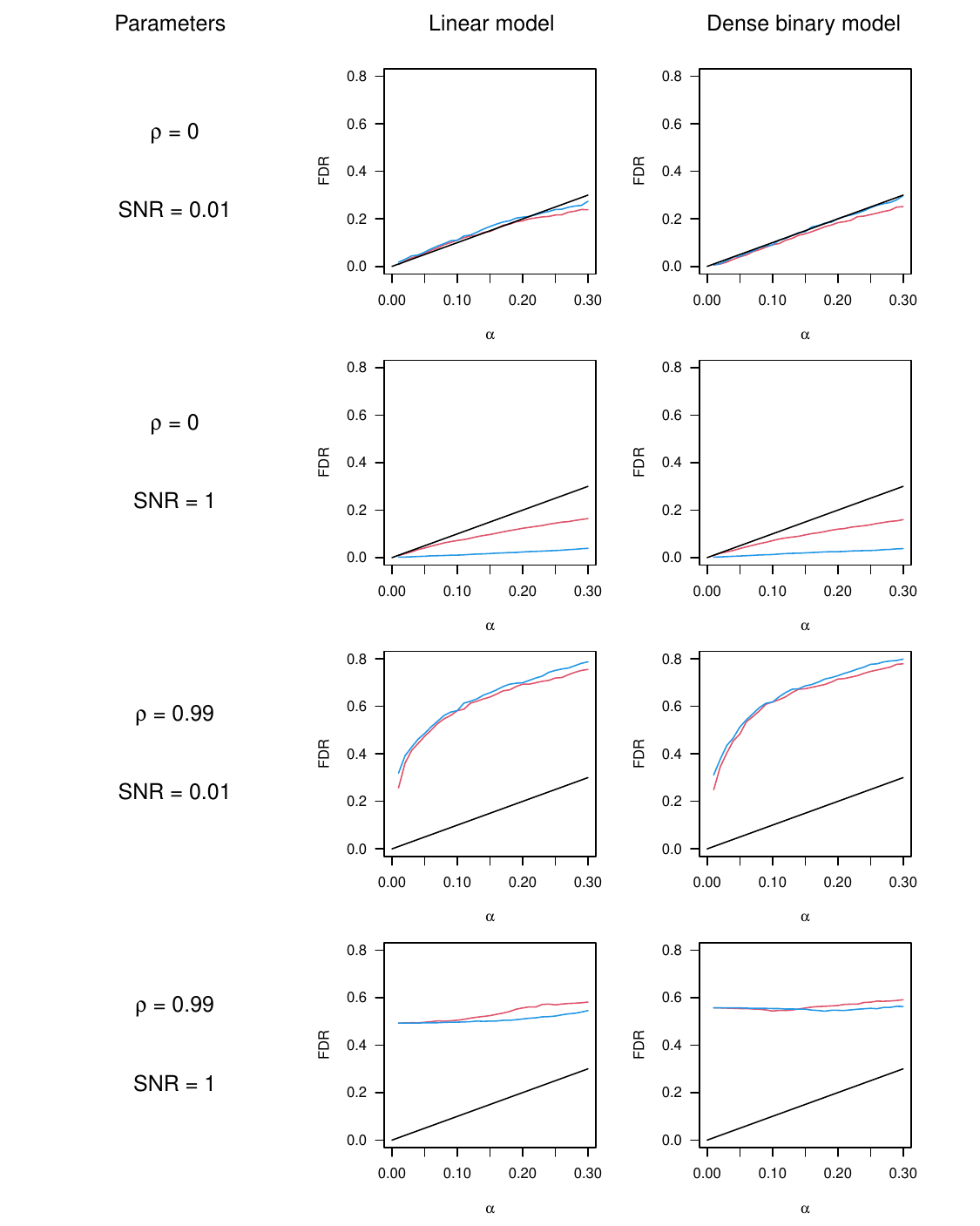}}
    \caption{FDR of the variable selection procedure with thresholding (blue) or \textit{ForwardStop} (red) as a function of $\alpha$ in 8 scenarios of linear model or dense binary model with 10 active variables.}
    \label{fig:procedure-simcal-FDR-1}
\end{figure}

Although theoretical results on controlling the FWER or FDR are valid only in the linear model, the shape of these curves varies little across model types. In scenarios where it is easiest to detect a signal, i.e., those with high SNR and especially among those, the uncorrelated scenarios, it is observed that the ForwardStop criterion is significantly less conservative than thresholding at an equal $\alpha$, with higher FWER and FDR. This is explained by a substantial number of small p-values among the first values of the sequence $(p_k)$, corresponding to easily detected variables, which, by averaging, pull down the quantity $p_k^{FS}$ at higher $k$, making it significantly lower than $p_k$.
\\
In all uncorrelated scenarios, as expected, FWER control is very close to $\alpha$ with the thresholding criterion (blue curves in Figure \ref{fig:procedure-simcal-FWER-1} are close to or below the bisectors). Additionally, in these scenarios, we observe FDR control at a level close to or lower than $\alpha$ with the ForwardStop criterion (red curves in Figure \ref{fig:procedure-simcal-FDR-1}). Due to the more conservative nature of thresholding compared to ForwardStop, thresholding also controls the FDR below $\alpha$ in uncorrelated scenarios, but ForwardStop does not control the FWER. The gain in sensitivity of ForwardStop compared to thresholding is modest, reaching a maximum of $0.127$ across all 252 scenarios.
\\
On the other hand, in scenarios with many active variables and strong correlation among covariates, neither the FWER nor the FDR is controlled by either halting criterion. The strong correlation between active and inactive variables often allows a nominally inactive variable to be selected by Lasso before the corresponding active variable. The selected variable can then have a statistically significant association with the response which is a manifestation of the actual association between the active variable and the response. This cannot be attributed to the nominally active variable, because it is not known and not detected by the Lasso. The status of these "false positives" carrying a statistical signal has been discussed in the literature \citep{gsell_sequential_2016}.

\subsubsection{Comparison with Lockhart et al.'s Covariance Test} \label{comparison covtest}

To evaluate the sensitivity of the variable selection procedure via simulation-calibration, it is useful to compare it to another selection method aiming for the same objectives. This competitor is the covariance test (CovTest) by \cite{lockhart_significance_2014}, which, like the simulation-calibration test, measures the significance of variables entering the Lasso path by assigning each one a p-value.
\\
Due to reasons of implementation of CovTest, we focused on the 63 linear model scenarios. We used CovTest with the equivalent of algorithm \ref{algorithme simcal itere}, with the thresholding halting criterion at $\alpha = 0.05$, which is supposed to control the FWER at this level. Figure \ref{fig:comparaison-covtest} shows the FWER and sensitivity of both methods across all linear model scenarios.

\begin{figure}
    \centering
    \makebox[\textwidth][c]{\includegraphics[width=0.9\textwidth]{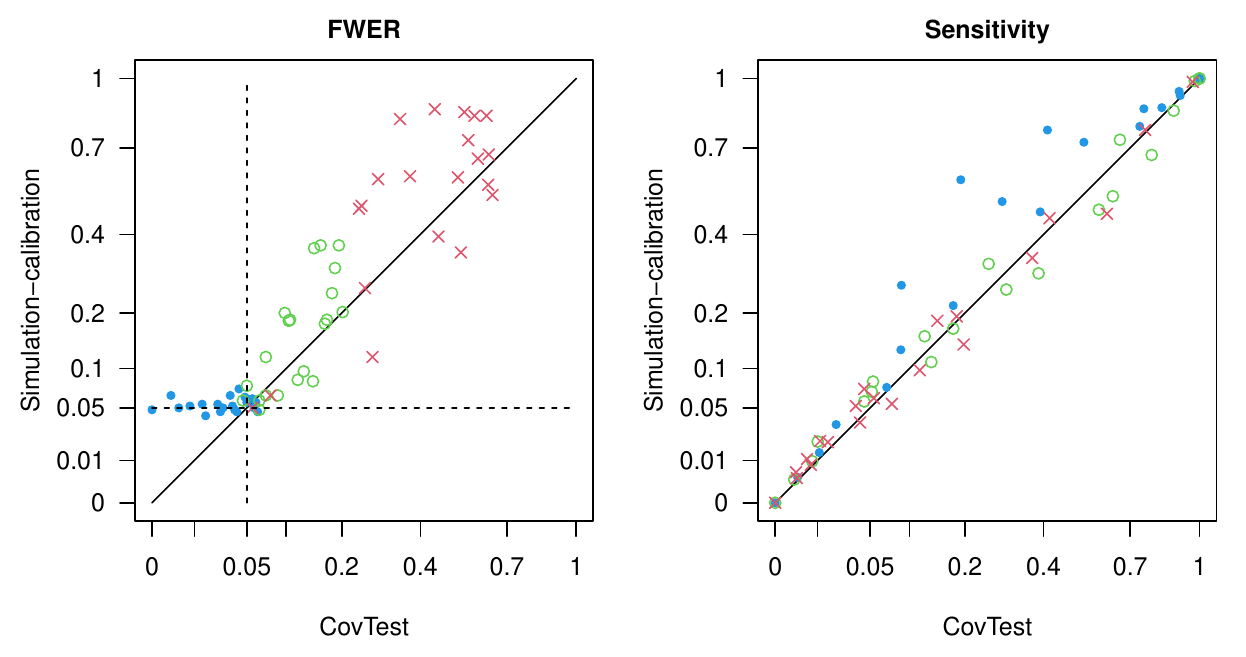}}
    \caption{Performance comparison of selection procedures with thresholding of CovTest and simulation-calibration test at $\alpha = 0.05$ across the 63 linear models (\textcolor{paletteBlue}{$\bullet$} $\rho = 0$, \textcolor{paletteGreen}{$\circ$} $\rho = 0.9$, \textcolor{paletteRed}{$\times$} $\rho = 0.99$). The position of a point indicates the performance of both methods on the same scenario, with the bisectors representing equal performance. The scale is quadratic.}
    \label{fig:comparaison-covtest}
\end{figure}

We observe a failure to control the FWER in correlated scenarios by both CovTest and simulation-calibration (points \textcolor{paletteGreen}{$\circ$} and \textcolor{paletteRed}{$\times$} in figure \ref{fig:comparaison-covtest}). In these scenarios, simulation-calibration tends to produce false positives more frequently than the CovTest when the signal-to-noise ratio (SNR) is low, but less frequently when it is high (see the impact of SNR in figure 6 of the supplementary material). However, the FWER is controlled at $5\%$ or less in the 21 linear scenarios with no correlation, both by the CovTest and by simulation-calibration (points \textcolor{paletteBlue}{$\bullet$}), but with a clear difference between the two methods.
\\
The FWER observed with the CovTest selection procedure falls well below its nominal level in certain scenarios, even reaching 0 --- meaning no false positives observed in 500 simulations --- in the scenario with 10 regressors, $\mathrm{SNR} = 0.3$.
\\
In contrast, the FWERs observed with the simulation-calibration procedure across the 21 uncorrelated scenarios all hover close to their nominal levels, ranging from $0.042$ to $0.072$. This implies that the number of simulated data sets on which the procedure produces at least one false positive ranges between $21$ and $36$ out of $500$. Given the Bonferroni correction, this is consistent with the hypothesis that the true FWER is equal to $\alpha$ in each of these scenarios, i.e., that the number of data sets exhibiting at least one false positive in each scenario follows a $\mathrm{Binomial}(500, 0.05)$ distribution. Indeed, the p-value associated with the largest of the 21 observed FWERs is $0.0196 = 0.412 / 21$.
\\
This stronger conservatism of the CovTest procedure in uncorrelated scenarios results in notable gaps in sensitivity to the advantage of the simulation-calibration procedure. The sensitivity difference is positive or zero in each of these 21 scenarios, exceeds $0.1$ in 6 of them (all with at least 5 active regressors and an SNR of at least $0.1$), and reaches a maximum of $0.388$ in the scenario with 10 regressors, $\mathrm{SNR} = 0.3$. Therefore, the simulation-calibration test represents an improvement over the CovTest under the ideal condition of no correlation between covariates, with substantial gains in sensitivity allowed by a controlled increase in FWER that does not significantly exceed its nominal level.

\section{Application to pharmacovigilance data} \label{BNPV}

We illustrated the sequential variable selection procedure by using it on data from the national pharmacovigilance database (BNPV). We used the same data preprocessing as described in \cite{courtois_new_2021}, producing a database of $n = \numprint{452914}$ spontaneous notifications of adverse drug reactions from January 1, 2000, to December 29, 2017, with $6617$ distinct adverse events (coded at the Preferred Term level of the Medical Dictionary for Regulatory Activities, MedDRA) and $p = 1692$ distinct drugs (coded at the 5\textsuperscript{th} level of the Anatomical Therapeutic Chemical hierarchy) reported at least 10 times. We focused on a binary outcome, the adverse event "Drug-Induced Liver Injury" (DILI). It is one of the most frequent adverse events with $25187$ occurrences, accounting for $5.56\%$ of all spontaneous notifications. We used a logistic regression model on the drug exposures, which are binary covariates. To reduce computation time, we used only $N = 50$ simulations in the computation of empirical p-values.
\\
When a variable is selected with a positive coefficient estimate, we considered it to be a pharmacovigilance signal. To assess the performance of our method, as in \cite{courtois_new_2021}, we used the reference set DILIrank of known pharmacovigilance signals related to drug-induced liver injuries \cite{chen_dilirank_2016}. It includes 203 negative controls (drugs known not to be associated with DILI) and 133 positives (drugs known to be associated with DILI).
\\
Given initial results, it appeared necessary to preprocess the data before performing the Lasso. Figure 7 in the supplementary material shows that the p-values estimated by simulation-calibration are zero for the first 14 variables on the Lasso path, then distributed around $0.1$. This sudden change coincides with a peculiarity in the correlation structure of the exposure matrix $X$: the 14\textsuperscript{th} exposure selected, trimethoprim (ATC code J01EA01), is exceptionally correlated ($\rho = 0.9998$) with another exposure, sulfamethoxazole (J01EC01), a drug with which trimethoprim is almost always co-prescribed.

\begin{figure}
    \centering
    \makebox[\textwidth][c]{\includegraphics[width=0.9\textwidth]{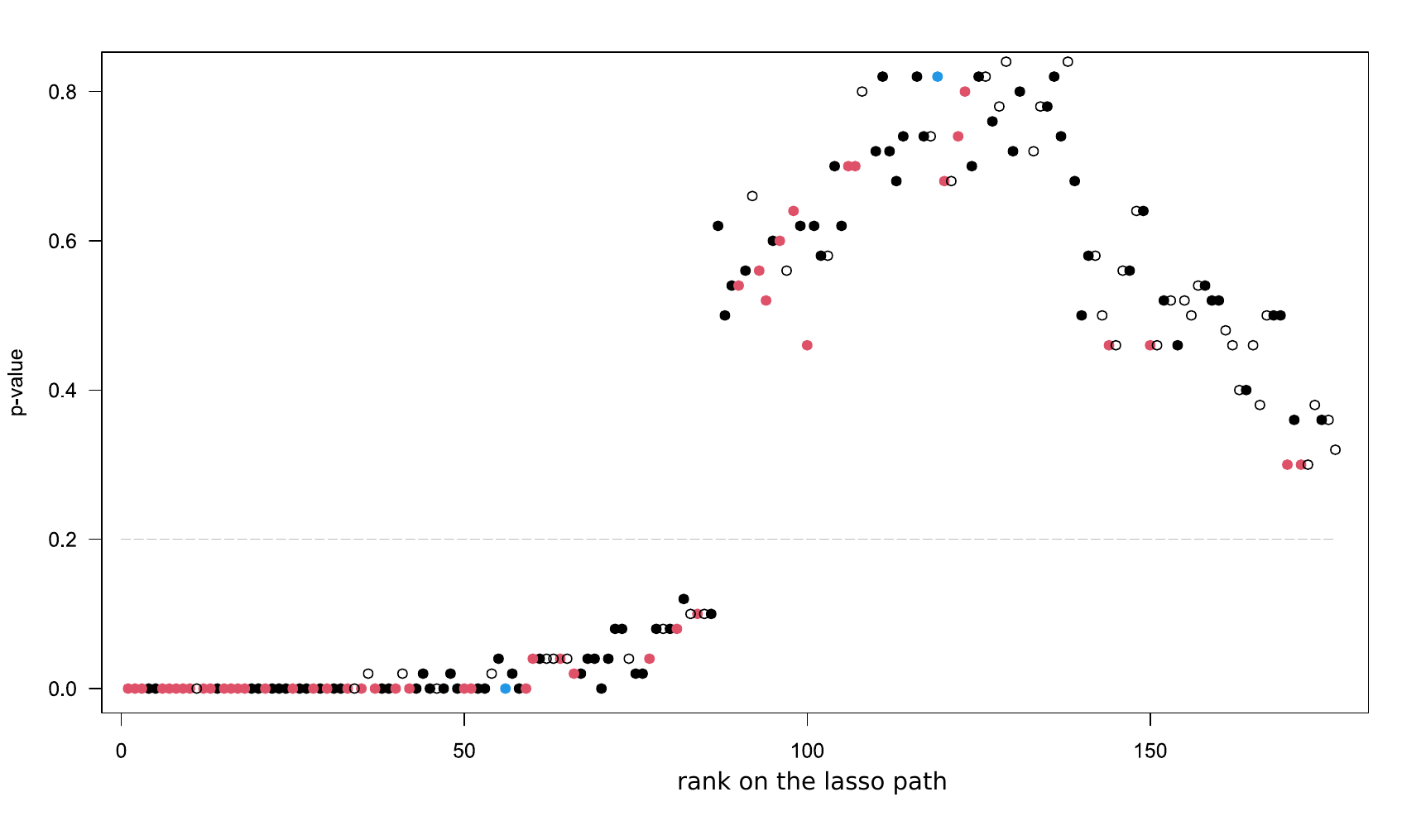}}
    \caption{p-values estimated by simulation-calibration along the Lasso path after removing the strongest correlations, with color indicating the status of signals according to DILIrank. \textcolor{paletteRed}{$\bullet$}: positive association known as such; \textcolor{paletteBlue}{$\bullet$}: positive association contradicted by DILIrank; $\bullet$: unlisted positive association; $\circ$: negative association (no signal).}
    \label{fig:simcal-BNPV-nocor}
\end{figure}

\begin{figure}
    \centering
    \makebox[\textwidth][c]{\includegraphics[width=0.9\textwidth]{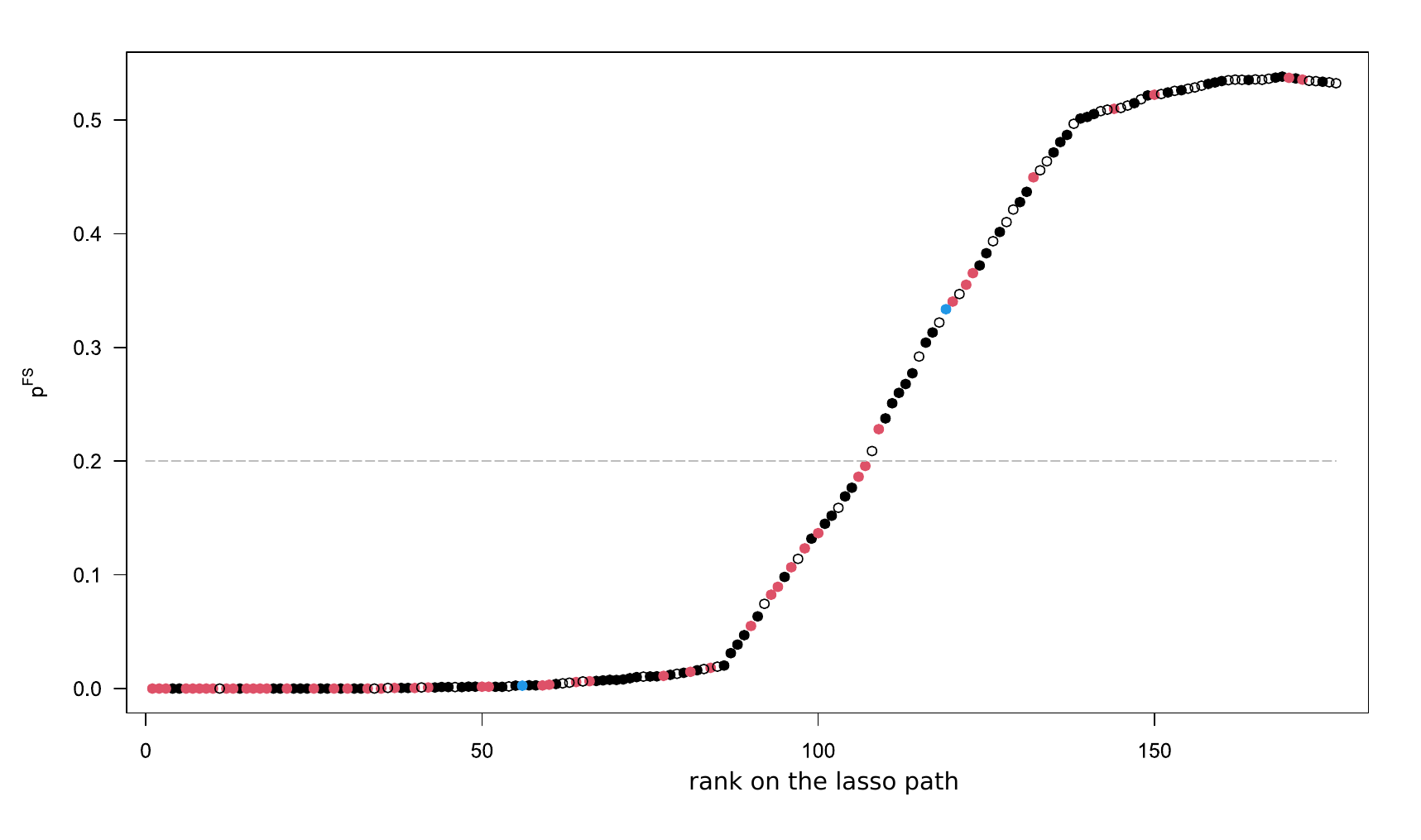}}
    \caption{Quantities of interest from the \textit{ForwardStop} criterion, $p^{FS}$ (see section \ref{ForwardStop}), derived from the p-values estimated by simulation-calibration after removing the strongest correlations. The color indicates the status of the signals as on figure \ref{fig:simcal-BNPV-nocor}.}
    \label{fig:simcal-BNPV-nocor-FS}
\end{figure}

We removed from the matrix $X$ the 6 covariables that had a correlation of at least $0.9$ with any of the variables on the Lasso path (computed before this removal), and re-used the Lasso and the sequence of simulation-calibration tests on these purged data ($p = 1686$). The Lasso yields identical results since the removed variables were not selected by it. However, p-values estimated by simulation-calibration (Figure \ref{fig:simcal-BNPV-nocor}) are lower starting from the 15\textsuperscript{th} variable, and no longer show a regime change at that point. They do show a comparable regime change further along the Lasso path, after the p-value associated with the 86\textsuperscript{th} selected variable (P01BDXX, diaminopyrimidines). This variable is also correlated with another variable in the database (P01BD01, pyrimethamine), but at a lower level: $\rho = 0.307$, similar to the other correlations involving selected variables.
\\
Figure \ref{fig:simcal-BNPV-nocor-FS} shows that using ForwardStop instead of thresholding, in addition to resulting in less conservative selection at the same $\alpha$, has a stabilizing effect: while the empirical p-values exhibit fluctuations due to the limited number of simulations from which they are estimated, the quantity of interest $p^{FS}$ on which ForwardStop is based presents a much smoother profile, being calculated by averaging from the empirical p-values.

\begin{table}
\centering
\caption{\label{tab:BNPV simcal} Performance of variable selection by simulation-calibration on BNPV data in terms of number of pharmacovigilance signals (variables with a positive association with DILI), false discovery proportion (FDP), specificity and sensitivity.}
\begin{tabular}{ l  r  r  r  r  r  r } \hline
    Halting & \multirow{3}{*}{Signals} & Signals & False & FDR & Specificity & Sensitivity \\
    criterion &  & with known & positives & (\%) & (\%) & (\%) \\
    &  & status &  &  &  &  \\  \hline
    Thresholding  & \multirow{2}{*}{73} & \multirow{2}{*}{33} & \multirow{2}{*}{1} & \multirow{2}{*}{3.0} & \multirow{2}{*}{99.5} & \multirow{2}{*}{24.1} \\
    at $\alpha = 0.2$ &  &  &  &  &  &  \\
    ForwardStop  & \multirow{2}{*}{91} & \multirow{2}{*}{41} & \multirow{2}{*}{1} & \multirow{2}{*}{2.4} & \multirow{2}{*}{99.5} & \multirow{2}{*}{30.0} \\
    at $\alpha = 0.2$ &  &  &  &  &  &  \\ \hline
\end{tabular}
\end{table}

Table \ref{tab:BNPV simcal} shows the performances of the variable selection procedure --- with the thresholding or ForwardStop halting criterion, at the same level $\alpha = 0.2$ --- calculated from the status of drug exposures listed in the DILIrank reference set \citep{chen_dilirank_2016}. Only exposures where Lasso estimates a positive association with DILI are considered signals. The status of the exposures is also indicated in figures \ref{fig:simcal-BNPV-nocor} and \ref{fig:simcal-BNPV-nocor-FS} allowing visualization of performances at other levels of $\alpha$. Both approaches are markedly more conservative than the information criteria (including ForwardStop, which itself is less conservative than thresholding). This allows them to achieve a lower false positive rate at the expense of lower sensitivity. Daunorubicin (L01DB02) is the sole false positive according to DILIrank.

\section{Discussion}
We have proposed a test for the significance of variables entering the Lasso path, which <e use sequentially to select a model. It tests the null hypothesis $H_0 (A)$ which states that a known set of variables---$A$---includes all the active variables, and it focuses on $\lambda_A$, the largest value of $\lambda$ at which a variable not belonging to $A$ is selected. Rejecting the test means selecting this variable.
\\
We built the test to circumvent the difficulty of using $\lambda_A$ directly as a test statistic. Just as a p-value is the probability under the null hypothesis that the test statistic exceeds its observed value, the statistic we consider, $p_A$, is a conditional probability that $\lambda_A$ exceeds its observed value: conditional on the correlation structure relating the response $Y$ to the already selected variables $X_A$, given by $\widehat{\theta_A}$. This is why we call it a conditional p-value. We estimate it by the Monte Carlo method provided by algorithm \ref{linear simcal algorithm}: simulation-calibration. It involves simulating response vectors under the null hypothesis and then calibrating them on the condition on $\widehat{\theta_A}$. This distributes the response vectors following the conjunction of the null hypothesis and this condition, which allows using them to estimate the conditional p-value.
\\
This can be seen as a generalization of the permutation selection by \cite{sabourin_permutation_2015}, which simulates a population of response vectors with the same distribution as the observed $y$, and considers the population of the $\lambda$ at which a variable enters the Lasso path computed on these simulated vectors. In permutation selection, the focus in on the $\lambda$ at which the very first variable enters the Lasso path, i.e., in our notation, $\lambda_\emptyset$. Permutation is also a way to impose conserving $\widehat{\theta_\emptyset} = \left(\widehat{\beta_0}, \hat{\sigma} \right)$, the empirical mean and standard deviation of $y$. Permutation selection resembles the test of $H_0 (\emptyset)$ by simulation-calibration in these aspects. Since permutation selection retains the median of $\lambda_\emptyset$, using it to decide only on the selection or not of the first variable on the Lasso path would essentially amount to conducting this test at $\alpha = 0.5$ (the only difference being in the method of calibrated simulation, permutation, or post-simulation calibration of arbitrary vectors).
\\
We have proven in sections \ref{conditional pvalue} to \ref{prop simcal} the validity of our method in the case of the linear model. The conditional p-value $p_A (Y)$ under the null hypothesis follows a distribution which stochastically dominates the uniform distribution over $[0,1]$ (lemma \ref{inequality pAY}), and its estimator $\widehat{p_A} (Y)$ is unbiased, consistent with the number $N$ of simulations chosen to perform (lemma \ref{binomial distribution} and its consequences), and itself dominates the uniform distribution over the discrete set of values it can take (lemma \ref{discrete uniform}). This means that rejecting the null hypothesis based on a threshold $\alpha$ on $\widehat{p_A} (Y)$ controls the type I error rate at this level $\alpha$, with a small residual term. From a variable selection perspective, this type I error means the selection (necessarily erroneous) of an additional variable while all active variables are already selected, i.e., belong to $A$. Furthermore, we have proven (section \ref{extended theorem} and supplementary matereial) that even when some active variables have not been selected, the risk of selecting an inactive variable by simulation-calibration is controlled at this same level. This latter result requires orthogonality between active and inactive covariates. The control of the error under $H_0 (A)$ does not depend on any assumptions about the correlation structure.
\\
In discrete generalized linear models (binary and Poisson), the theoretical framework is different since there is no conditional p-value that follows or dominates the continuous uniform distribution over $[0,1]$. Nevertheless, the production of an empirical p-value by simulation-calibration adapts well to these models. Theoretically, $\widehat{p_A} (Y)$ is seen as the estimator of a probability whose definition approaches that of the conditional p-value (section \ref{nonlinear problem statement}). Practically, the "calibration" part of the algorithm becomes more complex: it is iterative and stochastic (algorithm \ref{nonlinear iterative calibration}) while linear calibration was a simple affine function. Although it is not guaranteed by a theorem, the simulation study (section \ref{etude simu simcal H0}) shows that the distribution of empirical p-values is indistinguishable from the uniform distribution in the majority of non-linear simulation scenarios and deviates slightly in others, without any scenario of failure to control the type I error at usual levels of $\alpha$.
\\
Simulation-calibration has the disadvantage of requiring a computation time that can be significant under certain circumstances. It involves, for each of the $N$ simulated vectors, adjusting the unpenalized model restricted to $A$ in order to perform the calibration, then applying the Lasso to the entire set of covariates to the calibrated vector. In nonlinear models, calibration is itself iterative and requires repeatedly readjusting the unpenalized restricted model. In the application to the BNPV data, the conditions for a very long computation time were met: large data ($\numprint{452914}$ by $1686$), a binary model, and especially an $A$ set of selected variables that reaches a significant size (over 100). It was therefore necessary to repeatedly adjust relatively large-dimensional logistic models without penalty. By contrast, the simulation study was carried out with a maximum of 10 active variables, so the number of selected variables did not exceed this order of magnitude, which limited the computation time.
\\
Our test is comparable, in its objectives, to the covariance test (CovTest) of \cite{lockhart_significance_2014}, which is also a test of significance for a variable preselected by the Lasso. We focus on the distribution of $\lambda_A$ alone and not, like Lockhart et al., on the evolution of the Lasso between two consecutive values of $\lambda$ where a variable enters the Lasso. Therefore, we avoid a situation of low power of the covariance test: when two active variables enter the Lasso at values of $\lambda$ close from each other, the CovTest's test statistic is the difference between the same quantity computed on Lasso results which differ only a little; this can lead artificially to selecting none of the two active variables even if both are significantly associated with the outcome. This could explain simulation-calibration's better sensitivity in the absence of correlation between covariates that we observe in section \ref{comparison covtest}. In the presence of correlation, however, the performance of simulation-calibration is not always better than that of the selection by CovTest, with lower FWER at equal sensitivity in high SNR scenarios, but higher FWER at equal sensitivity in low SNR scenarios (Figure \ref{fig:comparaison-covtest} and, in the supplementary material, Figure 6).
\\
Using the procedure on data from the French national pharmacovigilance database has highlighted an undesirable phenomenon that could help explain the poorer performance in some cases of correlation. We observe in Figure \ref{fig:simcal-BNPV-nocor} a change of pattern in the sequence of estimated p-values where, once a certain variable $j_0$ belongs to $A$, $\widehat{p_A} (y)$ is always relatively high. The detail of the Lasso performed in the simulation-calibrations provides an explanation for this phenomenon. We observe that when it occurs, there is a variable $j_1 \notin A$ correlated with $j_0$ and possibly with other variables in $A$ such that, among the population of $\lambda_A (y^{(l)})$ obtained by simulation-calibration (see algorithm \ref{nonlinear simcal algorithm}), a significant proportion are high values associated with an early selection of $j_1$ by the Lasso applied to $y^{(l)}$. Indeed, calibrating the $y^{(l)}$ on one or more associations with variables correlated with $j_1$ can lead to an association with $j_1$ itself, which is captured by the Lasso. These high $\lambda_A (y^{(l)})$ lead to the estimation of a high $\widehat{p_A} (y)$.
\\
This phenomenon reflects the influence of $j_A$ on the distribution of $\lambda_A$ under the null hypothesis. It might be more relevant, from the perspective of the test's power, to compare the observed $\lambda (y)$, not as we have done with the distribution of $\lambda_A (Y)$ conditional on $\widehat{\theta_A} (Y)$, but with its distribution under a double conditioning by $\widehat{\theta_A} (Y), j_A (Y)$, or more generally taking into account $j_A (Y)$. In practice, this would involve introducing a weighting in the calculation by averaging $\widehat{p_A} (Y)$.
\\
This phenomenon results, when $j_0 \in A$, in a loss of power in the selection of all variables, including those that are not correlated with $j_0$ or $j_1$. This loss of power in certain cases of correlation could offset the tendency of the test by simulation-calibration, observed in simulations without correlation, to be more powerful but less conservative than the CovTest which would explain why in some correlated scenarios, variable selection by simulation-calibration is less conservative than that by CovTest while it has practically the same sensitivity.
\\
Thesimulation-calibration procedure's lower FWER compared to that by CovTest in high-SNR correlated scenarios could be explained by the fact that due to the high SNR, it is more common for all active variables to be selected before inactive variables on the Lasso path. When this is the case, $H_0 (A)$ is verified (thus the risk of false positive is controlled) in all tests of the iterative procedure that are likely to produce a false positive. For the risk control to fail, it is necessary both for the active variables to be mixed with inactive variables on the Lasso path -- for there to be an iteration where an unverified $H_0 (A)$ is tested which can nevertheless lead to the selection of an inactive variable --, and for active variables to be correlated with inactive variables -- so that the extended theorem of control of the selection error (section \ref{extended theorem}) does not apply.

\bibliography{biblio.bib}

\end{document}